\newtheorem{proposition}{Proposition}
\newtheorem{corollary}{Corollary}
\theoremstyle{definition}
\newtheorem{remark}{Remark}
\newtheorem{example}{Example}
\newtheorem{definition}{Definition}
\newcommand{\blue}[1]{\textcolor{blue}{#1}}
\newcommand{\R}{\mathbb{R}} 
\newcommand{\nat}{\mathbb N} 
\newcommand{\half}{\tfrac{1}{2}} 
\newcommand{\tr}[1]{{\rm tr}\left[#1\right]} 
\newcommand{\id}{\mathbbm{1}} 
\newcommand{\vv}{\mathbf{v}} 
\newcommand{\vsigma}{\boldsymbol{\sigma}} 
\newcommand{\Mo}{\mathsf{M}}
\newcommand{\M}[2]{\mathcal{M}_{#1,#2}} 
\newcommand{\Mr}[2]{\mathcal{M}^{row}_{#1,#2}} 
\newcommand{\Mrall}{\mathcal{M}^{row}} 
\newcommand{\C}[3]{\mathcal{C}_{#1,#2}(#3)} 
\newcommand{\Call}[1]{\mathcal{C}(#1)} 
\newcommand{\Csr}[3]{\bar{\mathcal{C}}_{#1,#2}(#3)} 
\newcommand{\uleq}{\preceq} 
\newcommand{\ul}{\prec} 
\newcommand{\ugeq}{\succeq} 
\newcommand{\state}{\mathcal{S}} 
\newcommand{\effect}{\mathcal{E}} 
\newcommand{\obss}{\mathcal{O}(\mathcal{S})} 
\newcommand{\qubit}{\mathcal{Q}_2} 
\newcommand{\qudit}{\mathcal{Q}_d} 
\newcommand{\cld}{\mathcal{S}^{cl}_d} 
\newcommand{\cln}{\mathcal{S}^{cl}_n} 
\newcommand{\bit}{\mathcal{S}^{cl}_2} 
\newcommand{\rank}[1]{\mathrm{rank}(#1)} 
\newcommand{\rankw}{\mathrm{rank}} 
\newcommand{\nrank}[1]{\mathrm{rank}_+(#1)} 
\newcommand{\nrankw}{\mathrm{rank}_+} 
\newcommand{\psdrank}[1]{\mathrm{rank}_{psd}({#1})} 
\newcommand{\psdrankw}{\mathrm{rank}_{psd}} 
\newcommand{\lmax}[1]{\lambda_{max}(#1)} 
\newcommand{\lmaxx}{\lambda_{max}} 
\newcommand{\lminw}{\lambda_{min}} 
\renewcommand*\env@matrix[1][\arraystretch]{%
  \edef\arraystretch{#1}%
  \hskip -\arraycolsep
  \let\@ifnextchar\new@ifnextchar
  \array{*\c@MaxMatrixCols c}}
\begin{document}\setlength{\arraycolsep}{2pt}

\title[]{Communication tasks in operational theories}

\author[Heinosaari]{Teiko Heinosaari}
\author[Kerppo]{Oskari Kerppo}
\author[Lepp\"aj\"arvi]{Leevi Lepp\"aj\"arvi}

\email{Teiko Heinosaari: teiko.heinosaari@utu.fi}
\email{Oskari Kerppo: \ \ \ oeoker@utu.fi}
\email{Leevi Lepp\"aj\"arvi: leille@utu.fi}

\address{QTF Centre of Excellence, Turku Centre for Quantum Physics, Department of Physics and Astronomy, University of Turku, FI-20014 Turku, Finland}

\begin{abstract}
We investigate the question which communication tasks can be accomplished within a given operational theory. 
The concrete task is to find out which communication matrices have a prepare-and-measure implementation with states and measurement from a given theory. 
To set a general framework for this question we develop the ultraweak matrix majorization in the set of communication matrices. This preorder gives us means to determine when one communication task is more difficult than another. Furthermore, we introduce several monotones which can be used to compare and characterize the communication matrices. 
We observe that not only do the monotones allow us to compare communication matrices, but also their maximal values in a given theory are seen to relate to some physical properties of the theory. 
The maximal values can then be thought as `dimensions', allowing us to compare different theories to each other. We analyse the introduced monotones one by one and demonstrate how the set of implementable communication matrices is different in several theories with the focus being mainly on the difference between classical and quantum theories of a given dimension.
\end{abstract}

\maketitle

\section{Introduction}

There has been recently several studies on quantum prepare-and-measure scenarios from different points of view.
This topic connects to several active research areas, including self-testing (see e.g. \cite{TaKaVeRoBr18,FaKa19}), dimension witnesses (see e.g. \cite{GaBrHaAc10,BoQuBr14,SiVaWe16,Vicente19}) and foundational principles of quantum theory (see e.g. \cite{DaBrToBuVe17,DuAm18}).
In the current work our main aim is identify and investigate the mathematical structure and general features that this kind of question has in any operational theory.

We start by recalling the convex operational formulation of operational theories (also called general probabilistic theories, see e.g. \cite{Lami17} for more details). A system is described by its \emph{state} $s$ which we assume to be an element of a compact convex subset $\state$ of a finite-dimensional real vector space $\mathcal{V}$; we call $\state$ the state space of the theory. The convexity of $\state$ is a result of the possibility to mix states. Given a state space $\state$ we take the set of \emph{effects} $\effect(\state) \subset \mathcal{V}^*$, the simplest types of measurements, to consist of linear funtionals $e: \state \to [0,1]$ giving probabilities on states: $e(s)$ is interpreted as the probability that the event corresponding to the effect $e$ was registered in an experiment when the system was measured in a state $s$. A \emph{measurement} $M$ with $n$ outcomes is then taken to be a collection of effects $M_1, \ldots, M_n$ such that $\sum_{i=1}^n M_i(s) = 1$ for all states $s \in \state$ thus guaranteeing that some outcome is always  registered for all states.
We denote by $\qudit$ and $\cld$ the state spaces of $d$-dimensional quantum and classical systems, respectively.

By a \emph{communication matrix} (also called \emph{channel matrix} in \cite{FrWe15}) we mean any row-stochastic matrix $C$, i.e., a matrix with non-negative entries with each row suming to 1.
The interpretation in the current investigation is as follows.
Alice has a finite collection of states, called a \emph{state ensemble}, which we describe as a map $s:a\mapsto s_a$ from a finite set $\{1,\ldots,n\}$ to $\state$. 
Alice selects a label $a$ and sends a system in the respective state $s_a$ to Bob. Bob makes a measurement using a fixed measurement $M$, having possible outcomes $\{1,\ldots,m\}$.
The collection of all conditional probabilities describing this preparation-measurement scenario is written as a $n \times m$ communication matrix
\begin{equation}\label{eq:implementation}
C_{ab}= M_b(s_a) \, ,
\end{equation} 
and we then say that $C$ is \emph{implemented} with the pair $s,M$.
We remark that this is the simplest type of prepare-and-measure scenario; one may look for statistics coming from multiple measurements and then the object under investigation is often called a \emph{behaviour}. 
In this work we focus solely on communication matrices.

We can consider any communication matrix $C$ as a communication task.
The question then is: if communication is limited to systems of certain type (i.e. $\state$ is fixed), can one implement $C$?
We denote by $\C{n}{m}{\state}$ the set of all $n\times m$ communication matrices that have an implementation of the form \eqref{eq:implementation} for some $s,M$ belonging to the theory determined by $\state$.
We further denote by $\Call{\state}$ the set of $\state$-implementable communication matrices of finite size, i.e., $\Call{\state} = \cup_{n,m} \C{n}{m}{\state}$.

The starting point of the present investigation is to have a theory independent definition of when some communication task is more difficult than another one.
This will be defined as a preorder in the set of all communication matrices.
As we are going to see, in the classical theories there is one implementable communication task, namely the distinguishability of $n$ states, so that all other implementable communication tasks are easier than this task.
This is the unique feature of classical theories; in any non-classical theories there are implementable communication matrices that do not have a common upper bound in the set of implementable communication matrices.
We will then develop monotones for the defined preorder and link them to physical properties of $\state$.

In the literature, it is often assumed that Alice and Bob are sharing a global source of randomness, i.e., they have \emph{shared randomness}. 
With shared randomness Alice and Bob can implement any mixture of implementable communication matrices since they can coordinate the mixing of preparators and measurement devices.
We denote by $\Csr{n}{m}{\state}$ the convex hull of $\C{n}{m}{\state}$ and this set hence corresponds to all those $n \times m$ communication matrices that Alice and Bob can implement if they have additionally a source of shared randomness.
In Section \ref{sec:mix} we demonstrate that $\C{3}{3}{\qubit}$ is not convex, hence shared randomness is a truly additional resource when forming communication matrices.
Interestingly, it has been shown in \cite{FrWe15} that $\Csr{n}{m}{\qudit}=\Csr{n}{m}{\cld}$ for classical and quantum systems with the same operational dimension $d$. 
Therefore, in the considered simple communication tasks (i.e. consisting of single measurement devices) quantum systems provide no benefit over classical systems of the same dimension if shared randomness is available.
The non-convexity of $\C{n}{m}{\qudit}$ gives an additional motivation to identify the relevant mathematical structure of this set and urges to understand what are the quantum advantages when shared randomness is not used.

This paper is organized as follows. 
In Section \ref{sec:some} we provide some concrete examples of communication matrices that have physical interpretations. 
They are then used to illustrate the developments in later sections.
To motivate our approach, the non-convexity of $\C{3}{3}{\qubit}$ is proven in Section \ref{sec:mix}.
The essential concept of our investigation, namely, ultraweak matrix majorization, is defined and explained in Section \ref{sec:uw}.
The main contribution of this paper is to introduce and develop ultraweak monotones; in Section \ref{sec:monotones} we introduce six monotones and in Section \ref{sec:example} demonstrate their usefulness. 
In Section \ref{sec:comparison} we further investigate the introduced monotones.

\section{Some specific communication matrices}\label{sec:some}

We denote by $\M{a}{b}$ the set of $a\times b$ real matrices and by $\Mr{a}{b}$ the set of $a\times b$ row-stochastic matrices, i.e., those matrices that have nonnegative entries and the entries in each row sum to $1$.
Further, we denote by $\Mrall$ the set of all row-stochastic matrices of finite size, i.e., $\Mrall = \cup_{a,b} \Mr{a}{b}$.
In the following we list some subclasses of communication matrices that have specific physical interpretations.
We will later use these matrices to exemplify the general developments.

The possibility to distinguish $n$ states corresponds to the communication matrix $\id_n$; the distinguishability of states $s_1,\ldots,s_n$ means that there is a measurement $M$, with outcomes $1,\ldots,n$, such that
\begin{equation}\label{eq:implementation-delta}
M_b(s_a) = \delta_{ab}
\end{equation}  
for all $a,b=1,\ldots,n$.
In the other extreme, we denote by $V_n$ the $n\times n$ matrix that has $\tfrac{1}{n}$ everywhere.
This communication matrix is useless for all communication tasks.

The previous communication matrices belong to the following family of communication matrices. 
For every $0 \leq \epsilon\leq 1$, we denote by $D_{n,\epsilon}$ the $n\times n$ communication matrix that has $1-\epsilon$ in the diagonal and $\epsilon/(n-1)$ elsewhere, e.g.,
\begin{align*}
D_{3,1/3}= \begin{bmatrix}[1.3]
\frac{2}{3} & \frac{1}{6} & \frac{1}{6} \\ \frac{1}{6} & \frac{2}{3} & \frac{1}{6} \\ \frac{1}{6} & \frac{1}{6} & \frac{2}{3}
\end{bmatrix} \, .
\end{align*}
Clearly, $D_{n,0}=\id_n$ and $D_{n,1-1/n}=V_n$.
For $0<\epsilon<1-1/n$ we interpret the matrix $D_{n,\epsilon}$ as a noisy unbiased distinguishability matrix; it corresponds to the ability to distinguish $n$ states with the error probability $\epsilon$ and so that the probability of getting a wrong outcome is equal for all wrong outcomes. 
For $1-1/n<\epsilon \leq 1$ the off-diagonal elements are larger than the diagonal elements and for this reason a different interpretation is more natural.
Firstly,
\begin{equation}\label{eq:D-n-1}
D_{n,1} = \frac{1}{n-1} \left[\begin{array}{ccccc} 0 & 1 & 1 & \cdots & 1 \\1 & 0 & 1 & \cdots & 1 \\ 1 & 1  & 0 & & 1 \\ \vdots & &   & \ddots  \\ 1 & \cdots & \cdots & 1 & 0 \end{array}\right] \, .
\end{equation}
and this matrix is related to the \emph{uniform antidistinguishability} of $n$ states.
Namely, we recall that the antidistinguishability of states $s_1,\ldots,s_n$ means that for each index $a$ the corresponding outcome never occurs \cite{Leifer14}. 
We can further require that the other outcomes occur with equal probabilities and this then leads to the concept of uniform antidistinguishability \cite{HeKe19}.
It follows that for $1-1/n<\epsilon<1$ we can regard $D_{n,\epsilon}$ as a noisy uniform antidistinguishability matrix.
For later use we note that this class of matrices (with fixed $n$) is a monoid under the matrix multiplication and, in addition, $V_n$ is an absorbing element.
The product of $D_{n,\epsilon}$ and $D_{n,\mu}$ gives
\begin{equation}
D_{n,\epsilon} D_{n,\mu} = D_{n,\epsilon + \mu - \frac{n}{n-1} \epsilon \mu} \, .
\end{equation}

The uniform antidistinguishability can be seen as a special case of a more general task of communication of partial ignorance.
The mathematical formulation of this type of task leads to the following matrices \cite{HeKe19}.
For every integer pair $(n,t)$ with $n\geq 2$, $1\leq t \leq n-1$, we denote by $G_{n,t}$ the ${{n}\choose{t}} \times n$ communication matrix that has the first row 
$$
1/(n-t)\left[\begin{array}{cccccc} 1 & \cdots & 1 & 0 & \cdots 0  \end{array}\right]
$$
with $n-t$ ones and $t$ zeros.
The other rows are all possible permutations of this that give a different row, written in decreasing lexicographical order.
For instance, 
$$
G_{4,2} = \frac{1}{2} \left[\begin{array}{cccc} 1 & 1 & 0 & 0\\ 1 & 0 & 1 & 0\\ 1 & 0 & 0 & 1\\ 0 & 1 & 1 & 0\\ 0 & 1 & 0 & 1\\ 0 & 0 & 1 & 1  \end{array}\right] \, .
$$
As special cases, we have $G_{n,n-1}=\id_n$ and $G_{n,1}=A_n$, where
\begin{equation}\label{eq:anti-n}
A_n = \frac{1}{n-1} \left[\begin{array}{ccccc} 1 & 1 & \cdots & 1 & 0 \\1 & 1 & \cdots & 0 & 1 \\ 1 & 1  & 0 & & 1 \\ \vdots & &   & \ddots  \\ 0 & \cdots & \cdots & 1 & 1 \end{array}\right] \, .
\end{equation}
The matrix $A_n$ has the same rows as $D_{n,1}$ but in a different order.
They are in fact equivalent matrices in the way that will be clarified in Section \ref{sec:uw}.

The matrices $G_{n,t}$ have a natural interpretation as communication tasks where the goal is to communicate which choices should be avoided. 
Suppose that Alice and Bob play the following game with Charlie. In the game Charlie has $n$ boxes and he chooses to hide a prize in one of the boxes. Charlie then reveals $t$ empty boxes to Alice who must communicate this information to Bob, or more precisely, enable Bob to avoid the empty boxes. Alice and Bob win the game if Bob guesses the box with the prize correctly. It was shown in \cite{HeKe19} that in this game the best strategy for Alice and Bob is to implement the communication matrix $G_{n,t}$ when Charlie has $n$ boxes and $t$ boxes are revealed empty in each round.

Naturally, the `classicality' and `quantumness' of a communication matrix make sense only relative to dimension of the classical state space or the Hilbert space dimension; any communication matrix has a classical implementation with a suitably big classical state space.
Therefore, interesting communication matrices are those that have a realization with $d$-dimensional quantum system but not with $d$-dimensional classical system.
For instance, one can verify that $A_n\in\Call{\bit}$ if and only if $n=2$ while it has been shown in \cite{HeKe19} that $A_n\in\Call{\qubit}$ if and only if $n\in\{2,3,4\}$.

\section{Mixtures of communication matrices}\label{sec:mix}

Suppose that $C,C'\in\C{n}{m}{\state}$. A convex combination $t C + (1-t)C'$, $0<t<1$, is a row-stochastic matrix, so an obvious question is whether it also belongs to $\C{n}{m}{\state}$ or not.

Firstly, we observe that if $C$ and $C'$ have implementations with $s,M$ and $s',M'$, respectively, and if $s=s'$ or $M=M'$, then $t C + (1-t)C'$ can be implemented with the corresponding mixture of state ensembles or measurements.

Secondly, if Alice and Bob are allowed to have shared randomness, then by having implementations $s,M$ and $s',M'$ they can implement $t C + (1-t)C'$ by using the implementations in a coordinated way. 
However, one cannot conclude that $t C + (1-t)C'\in\C{n}{m}{\state}$ as shared randomness is an additional resource that is not part of the definition of $\C{n}{m}{\state}$.

With Example \ref{ex:nonconvex} we demonstrate that $C,C'\in\C{n}{m}{\qubit}$ does \emph{not} imply that $t C + (1-t)C'\in\C{n}{m}{\qubit}$, meaning that $\C{n}{m}{\qubit}$ is not a convex set.
This also means that shared randomness is indeed an additional resource that can be used to implement some communication matrices that cannot be implemented without it.
A 
We start by proving a simple auxiliary result that is utilized in Example \ref{ex:nonconvex}  and can be used to generate other similar examples.

\begin{proposition}\label{prop:rank1}
If $C\in\C{n}{n}{\qudit}$, then $\tr{C}\leq d$.
Assuming that $C$ contains no zero columns, the equality holds only if $C$ has an implementation with pure states and rank-1 POVM (i.e. every operator has rank one). 
\end{proposition}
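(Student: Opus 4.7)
The plan is to unpack the trace of $C$ using an implementation with states $s_a$ and a POVM $\{M_a\}_{a=1}^n$ on $\hi_d$ and then chain two elementary inequalities. Starting from $\tr{C}=\sum_{a=1}^n \tr{M_a s_a}$, I would first use the variational characterization of the largest eigenvalue to write
\[
\tr{M_a s_a}\leq \lmaxw(M_a)
\]
(because $s_a$ is a density operator, so $\tr{M_a s_a}$ is a convex combination of the eigenvalues of $M_a$). Next, since $M_a\geq 0$, the largest eigenvalue is bounded by the trace: $\lmaxw(M_a)\leq \tr{M_a}$. Summing over $a$ and exploiting $\sum_a M_a=\id_d$ gives
\[
\tr{C}\leq \sum_{a=1}^n \tr{M_a}=\tr{\id_d}=d.
\]

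For the equality case I would analyse when each of the two inequalities is saturated. The inequality $\lmaxw(M_a)\leq \tr{M_a}$ is an equality precisely when $M_a$ has at most one nonzero eigenvalue, i.e.\ $M_a$ has rank at most one. The no-zero-column assumption rules out $M_a=0$: if the $a$-th column of $C$ vanishes then $\tr{M_a s_b}=0$ for every state $s_b$, forcing $M_a=\nul$. Hence every $M_a$ is genuinely rank one, $M_a=\alpha_a\kb{\phi_a}{\phi_a}$ with $\alpha_a>0$. The inequality $\tr{M_a s_a}\leq \lmaxw(M_a)$ then becomes $\alpha_a\langle\phi_a|s_a|\phi_a\rangle=\alpha_a$, so $\langle\phi_a|s_a|\phi_a\rangle=1$. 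A standard argument (combining $\tr{s_a}=1$ with $s_a\geq 0$) forces $s_a=\kb{\phi_a}{\phi_a}$, i.e.\ $s_a$ is pure.

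The argument is entirely elementary; the only delicate point is making sure that the equality analysis really produces rank exactly one rather than rank at most one, which is exactly where the no-zero-columns hypothesis is needed. Beyond that, the proof is a short computation and I do not anticipate any real obstacle.
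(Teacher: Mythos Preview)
Your proof is correct and follows the same route as the paper: bound each diagonal entry $\tr{M_a s_a}$ by the largest eigenvalue of $M_a$, then by $\tr{M_a}$, and sum using $\sum_a M_a=\id_d$; the equality analysis is likewise identical. One cosmetic remark: in the step where the no-zero-column hypothesis rules out $M_a=0$, you stated the implication in the wrong direction---you argued that a vanishing $a$-th column forces $M_a=0$, which would require the ensemble $\{s_b\}$ to be informationally complete (not assumed). What you actually need, and what is trivially true, is the converse: if $M_a=0$ then every entry $C_{ba}=\tr{M_a s_b}$ of the $a$-th column vanishes, so the hypothesis excludes $M_a=0$. This does not affect the validity of the argument, only the phrasing.
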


\begin{proof}
Suppose $C\in\C{n}{n}{\qudit}$ and hence $C_{ab}=\tr{\varrho_a \Mo(b)}$ for some states $\varrho_1,\ldots,\varrho_n$ and $n$-outcome POVM $\Mo$ in $\qudit$.
We have $C_{ab} \leq r_b$, where $r_b$ is the maximal eigenvalue of $\Mo(b)$.
It follows that
\begin{align*}
\tr{C} \leq \sum_b r_b \leq \sum_b \tr{\Mo(b)} = \tr{\id} = d \, .
\end{align*}
The second inequality is equality if and only if $\tr{\Mo(b)}=r_b$ for every $b$, and this means that $\Mo(b)$ is either rank-1 or zero.
Assuming that $C$ contains no zero columns, every $\Mo(b)$ must be rank-1.
It follows that $\varrho_b$ must be the unique eigenstate of $\Mo(b)$ with eigenvalue $r_b$.
\end{proof}

\begin{example}\label{ex:nonconvex}
Let us first notice that $D_{3,1/3}\in\C{3}{3}{\qubit}$.
Namely, we obtain this communication matrix e.g. by choosing
\begin{align*}
\varrho_1 = \begin{bmatrix}
1 & 0 \\
0 & 0
\end{bmatrix}, \, \varrho_2 =  \frac{1}{4} \begin{bmatrix}
1& \sqrt{3}\\
\sqrt{3} & 3
\end{bmatrix}, \, \varrho_3 =  \frac{1}{4} \begin{bmatrix}
1 & -\sqrt{3}\\
-\sqrt{3} & 3
\end{bmatrix}
\end{align*} and $\Mo(b)=\tfrac{2}{3} \varrho_b$, $b=1,2,3$.

We then observe that if a communication matrix $D\in\C{3}{3}{\qubit}$ has $\frac{2}{3}$ in all diagonal entries, then $D=D_{3,1/3}$.
To see this, we use Prop. \ref{prop:rank1} and hence write $D_{ab}= \tr{\varrho_a \Mo(b)}$ with $\Mo(b) = \tfrac{1}{3} (\id + \vv_b \cdot \vsigma)$, where $\vv_1,\vv_2,\vv_3$ are unit vectors in $\R^3$ and $\vsigma=(\sigma_x,\sigma_y,\sigma_z)$.
As $\sum_b \Mo(b)=\id$, the unit vectors are summing to $0$. 
It follows that they are determined up to a unitary transformation.
Further, we must have $\varrho_a = \tfrac{3}{2} \Mo(a)$ in order to have the maximal eigenvalues in the diagonal.
These facts imply that $D=D_{3,1/3}$.

Let us then construct two qubit communication matrices.
We choose qubit states 
\begin{align*}
\varrho_1 = \begin{bmatrix}
1 & 0 \\
0 & 0
\end{bmatrix}, \, \varrho_2 = \begin{bmatrix}
0 & 0\\
0 & 1
\end{bmatrix}, \, \varrho_3 = \begin{bmatrix}[1.2]
\frac 12 & -\frac{3}{10} + \frac{2i}{5} \\
-\frac{3}{10}- \frac{2i}{5} & \frac 12
\end{bmatrix}
\end{align*} and qubit measurement \begin{align*}
\Mo(1) = \frac 14 \begin{bmatrix}
1 & 1 \\
1 & 1
\end{bmatrix}, \, \Mo(2) = \frac 13 \begin{bmatrix}
1 & -i \\
i & 1
\end{bmatrix}, \, \Mo(3) = \begin{bmatrix}[1.2]
\frac{5}{12} & -\frac 14 + \frac i3 \\
- \frac 14 - \frac i3 & \frac{5}{12}
\end{bmatrix}.
\end{align*} 
With these choices we obtain the communication matrix 
\begin{align*}
C =  \tr{\varrho_i \Mo(j)}  = \begin{bmatrix}[1.2]
\frac 12 & \frac 13 & \frac 16 \\
\frac 14 & \frac 23 & \frac{1}{12} \\
\frac{1}{10} & \frac{1}{15} & \frac 56
\end{bmatrix}.\end{align*} 
We define a new set of states by setting $\varrho_1'=\varrho_3$, $\varrho_2'=\varrho_2$ and $\varrho_3'=\varrho_1$, and a new measurement $\Mo'$ as $\Mo'(1)=\Mo(3)$, $\Mo'(2)=\Mo(2)$ and $\Mo'(3)=\Mo(1)$.
With these choices we get the communication matrix
\begin{align*}
C' =  \begin{bmatrix}[1.2]
\frac 56 & \frac{1}{15} & \frac{1}{10} \\
\frac{1}{12} & \frac 23 & \frac 14 \\
\frac 16 & \frac 13 & \frac 12
\end{bmatrix} \, .
\end{align*} 
The equal mixture of $C$ and $C'$ gives
\begin{align*}
\half C + \half C'  = \begin{bmatrix}[1.2]
\frac 23 & \frac 15 & \frac{2}{15} \\
\frac 16 & \frac 23 & \frac 16 \\
\frac{2}{15} & \frac 15 & \frac 23
\end{bmatrix} \, .
\end{align*} 
This matrix has $\frac{2}{3}$ in all diagonal entries  but it is different than $D_{3,1/3}$.
By our earlier observation we conclude that $\half C + \half C' \notin \C{3}{3}{\qubit}$, hence $\C{3}{3}{\qubit}$ is not convex.
\end{example}

\section{Ultraweak matrix majorization}\label{sec:uw}

The concept of ultraweak matrix majorization was introduced in the present physical context in \cite{HeKe19}. 
In \cite{CSM98} this relation has been coined as I/O-degradation, where I-O abbreviates input and output.
In this section we recall this concept and develop it further.
The ultraweak matrix majorization gives a precise meaning when one communication task is more difficult than another one, or when they should be considered equally difficult. 

\begin{definition}
For two matrices $C\in\M{a}{b}$ and $D\in\M{c}{d}$, we denote $C\uleq D$ if there are row-stochastic matrices $L\in\Mr{a}{c}$ and $R\in\Mr{d}{b}$ such that $C=LDR$.
We then say that $C$ is \emph{ultraweakly majorized} by $D$.
We further denote $C \simeq D$ if $C\uleq D \uleq C$, and in this case we say that $C$ and $D$ are \emph{ultraweakly equivalent}.
\end{definition}

The relation $\uleq$ is reflexive and transitive, hence a preorder. 
It fails to be antisymmetric; we can have $C\simeq D$ although $C\neq D$.

The ultraweak majorization is weaker than the matrix majorization \cite{Dahl99} and weak matrix majorization \cite{PeMaSi05}; these are defined in the same way but in the matrix majorization $L=\id$ while in the weak matrix majorization $R=\id$.
We note that the matrix majorization makes sense for two matrices only when they have the same number of rows, while in the weak matrix majorization the  number of columns is assumed to be the same. 
However, in the ultraweak majorization matrices can have different number of rows and columns.

The reason to introduce and study ultraweak majorization in the current investigation is that $\Call{\state}$, the set of all communication matrices implementable with a given system $\state$, is closed with respect to ultraweak majorization.
Namely, as shown in \cite{HeKe19}, if $C \uleq D$ and $D\in\Call{\state}$, then $C\in\Call{\state}$.
In a physical setting the relation $C=LDR$ can be seen as follows. 
Suppose we have a communication setup that implements $D$.
If $C =LDR$, then by mixing and relabeling the states and measurement outcomes we can implement $C$, the respective operations being given by $L$ on states and $R$ on measurement outcomes.
For this reason, we take $C \uleq D$ to be the mathematical formulation of the idea that $C$ is easier or equally easy to implement than $D$.

\begin{figure}[t]
\begin{center}
\includegraphics[scale=0.5]{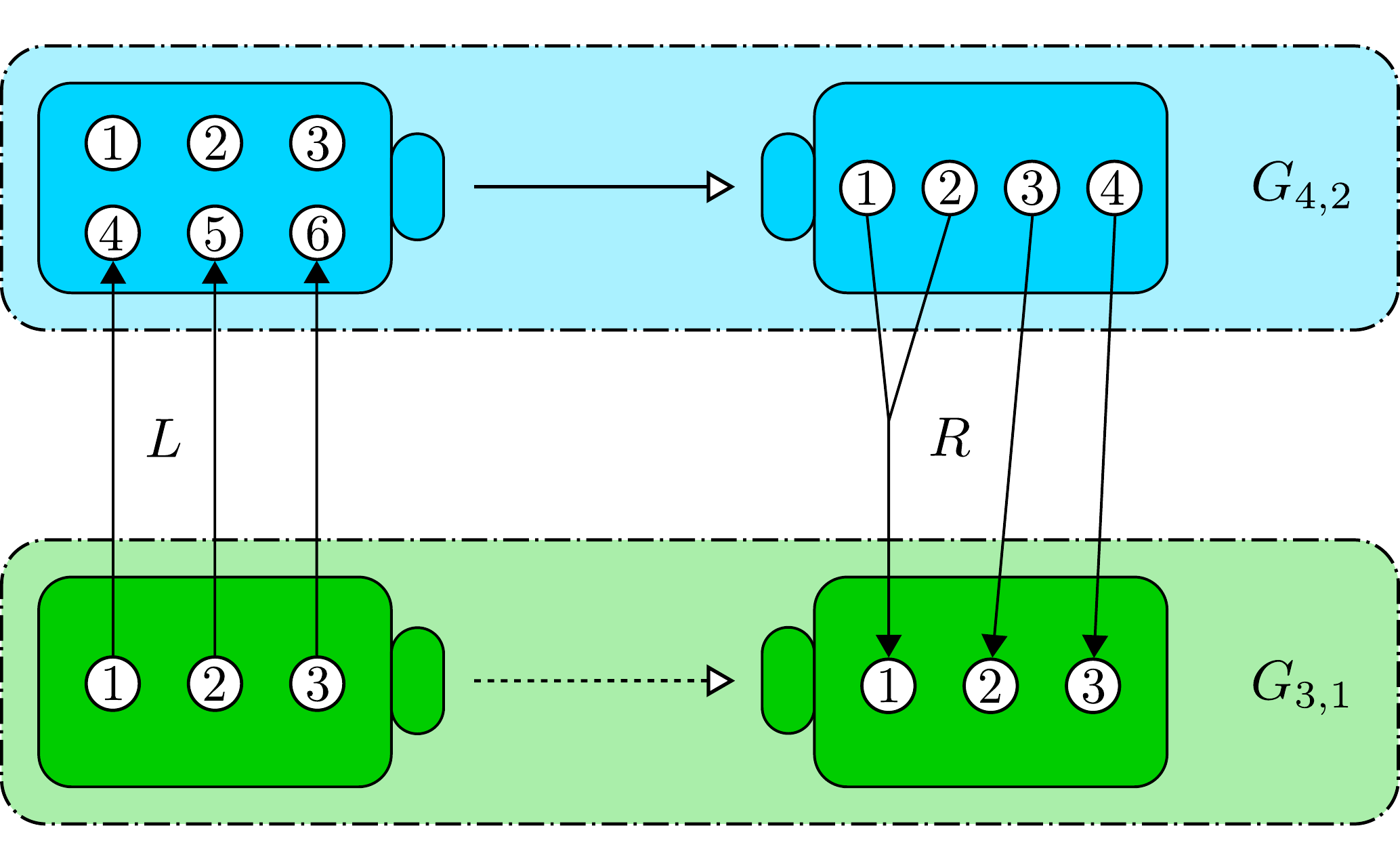}
\caption{\label{fig-1} The blue devices that implement the communication matrix $G_{4,2}$ can also be used to implement $G_{3,1}$ without the green devices. The majorization matrizes $L$ and $R$ written in \eqref{eq:G31} give us instructions how to achieve this by preprocessing the states and postprocessing the measurement outcomes. In this case we see that we need to only consider the last three preparations in the blue preparator and when measuring the system with the blue measurement device with four outcomes we need to combine the first two outcomes into a single new outcome in order to obtain the same statistics as one would get with the green one.}
\end{center}
\end{figure}

As an example, it has been shown in \cite{HeKe19} that $G_{n,t-1} \uleq G_{n,t} \uleq G_{n+1,t+1}$. 
The concrete case of $G_{3,1} \uleq G_{4,2}$ can be explicitly written as
\begin{equation}\label{eq:G31}
G_{3,1} = \half 
\begin{bmatrix}[1]
 1 & 1 & 0 \\
 1 & 0 & 1 \\ 
 0 & 1 & 1
\end{bmatrix} = 
\begin{bmatrix}
0 & 0 & 0 & 1 & 0 & 0 \\
0 & 0 & 0 & 0 & 1 & 0 \\
0 & 0 & 0 & 0 & 0 & 1 
\end{bmatrix}
\half \begin{bmatrix}
1 & 1 & 0 & 0 \\ 
1 & 0 & 1 & 0 \\ 
1 & 0 & 0 & 1 \\ 
0 & 1 & 1 & 0 \\ 
0 & 1 & 0 & 1 \\ 
0 & 0 & 1 & 1  
\end{bmatrix}
\begin{bmatrix}
1 & 0 & 0 \\
1 & 0 & 0 \\
0 & 1 & 0 \\
0 & 0 & 1 
\end{bmatrix}
= L G_{4,2} R,
\end{equation}
and this is illustrated in Fig. \ref{fig-1}.
It is easy to see that for any $n$ and $\epsilon\in[0,1]$, we have $V_n \uleq D_{n,\epsilon} \uleq \id_n$.
We will fully characterize the ultraweak ordering of $D_{n,\epsilon}$ matrices in Section \ref{sec:example}.   

Some simple sufficient conditions for ultraweak equivalence are listed below. The conditions (a)--(c) were presented already in \cite{CSM98}.
The condition (d) generalizes (b) while (e) generalizes (c).
The proofs are straightforward and we skip them.

\begin{proposition}\label{prop:equivalent}
In the following cases two matrices $C$ and $D$ are ultraweakly equivalent.
\begin{itemize}
\item[(a)] $C$ is obtained from $D$ by permuting rows and columns.
\item[(b)] $C$ is obtained from $D$ by duplicating some of the rows of $D$.
\item[(c)] $C$ is obtained from $D$ by adding zero columns to $D$.
\item[(d)] $C$ is obtained from $D$ by adding a row that is a convex mixture of some of the rows $D$.
\item[(e)] $C$ is obtained from $D$ by splitting a column into several columns with weights summing into $1$.
\end{itemize}
\end{proposition}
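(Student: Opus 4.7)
The plan is to handle each of (a)--(e) by exhibiting explicit row-stochastic matrices $L,R$ witnessing $C = LDR$ together with $L',R'$ witnessing $D = L'CR'$. In every case one of the two matrices can be taken to be an identity, so the verification reduces to checking row-stochasticity of a single nontrivial block in each direction.

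For (a), take $L$ and $R$ to be the permutation matrices that reorder rows and columns of $D$ into $C$; permutation matrices are row-stochastic, and their transposes handle the converse. For (b), if $C$ is built from $D$ by duplication, define row $i$ of $L$ to be the standard basis vector picking out the row of $D$ copied into row $i$ of $C$; then $C = L D$, and for the converse $L'$ selects one representative of each original row so that $D = L'C$. For (c), padding $\id_b$ on the right with zero columns gives a row-stochastic $R$ with $D R = C$; conversely $R'$ can be taken as $\id_b$ stacked on top of any row-stochastic block (e.g.\ rows equal to $(1,0,\ldots,0)$), because the extra columns of $C$ are zero and therefore contribute nothing to $CR' = D$.

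For (d), with the added row equal to $\sum_k \lambda_k D_{k,\cdot}$ where $\lambda_k \geq 0$ and $\sum_k \lambda_k = 1$, take $L$ to be $\id_a$ extended by a bottom row $(\lambda_1,\ldots,\lambda_a)$; this row sums to $1$ by assumption, and $LD = C$. The reverse direction uses $L' = [\,\id_a \mid 0\,]$. For (e), if column $k$ of $D$ is replaced in $C$ by columns $\mu_1 D_{\cdot,k}, \ldots, \mu_p D_{\cdot,k}$ with $\sum_j \mu_j = 1$, let $R$ agree with the identity on unchanged columns and place the entries $\mu_1,\ldots,\mu_p$ in row $k$ across the $p$ new column positions (and $0$ elsewhere in those columns). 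Then row $k$ of $R$ sums to $\sum_j \mu_j = 1$, the other rows trivially, and $DR = C$. The converse $R'$ merges the $p$ split columns by mapping each of them back to the original index $k$.

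The steps are elementary, and the only (mild) obstacle is the bookkeeping required to confirm that each constructed matrix has nonnegative entries and rows summing to $1$. Since every block is assembled from identities, selection rows, and the prescribed convex weights, this verification is routine and transitivity of $\preceq$ (recorded after the definition) gives $C \simeq D$ in each case.
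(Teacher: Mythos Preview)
Your proposal is correct. The paper itself does not give a proof of this proposition at all (it states that ``the proofs are straightforward and we skip them''), so your explicit constructions of the row-stochastic witnesses $L,R$ and $L',R'$ are precisely the routine verification the authors omitted, and there is no alternative approach to compare against.
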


All of the conditions (a)--(e) have clear operational interpretations.
The physical operations that correspond to these conditions are the following:
\begin{itemize}
\item[(a)] Relabel bijectively states and measurement outcomes.
\item[(b)] Add a state that is identical to one of the existing states.
\item[(c)] Add the zero effect to the measurement device.
\item[(d)] Add a state that is a mixture of the existing states.
\item[(e)] Split an outcome of the measurement device into several outcomes, possibly with different probabilistic weights.
\end{itemize}
We note that as the ultraweak equivalence is a symmetric relation, these actions can be reverted. 
For instance, we can remove a zero column from a communication matrix and it is ultraweakly equivalent to the original one.

\section{Ultraweak monotones}\label{sec:monotones}

\begin{definition}
A function $f:\Mrall \to \R$ is an \emph{ultraweak monotone} if $C\uleq D$ implies $f(C) \leq f(D)$ for all $C,D\in\Mrall$.
\end{definition}

As with any monotones, ultraweak monotones can be used to verify that a matrix does not majorize another one; if $f(C) > f(D)$ for some $C,D\in\Mrall$, we can conclude that $C\npreceq D$.
Especially, if $f(C) \neq f(D)$, then $C\not\simeq D$.
In the latter case we say that $f$ \emph{detects the inequivalence} of $C$ and $D$.

For a given monotone $f$, we are also interested on the maximal value of $f$ in a given operational theory. 
As we will see, these maximal values may give operationally motivated `dimensions' of $\state$ and can be used to analyze the dissimilarities between different theories.

\subsection{Rank}

The rank is an ultraweak monotone as the rank cannot increase in matrix multiplication.
The rank seems not to have a direct operational meaning.
However, the maximal rank in $\state$ links to the linear dimension in the following way and gives a convenient necessary condition to test if $C\in\Call{\mathcal{S}}$.

\begin{proposition}\label{prop:maximal-rank}
Let $\mathcal{S}$ be a $d$-dimensional state space. 
Then the maximal rank of $C\in\Call{\mathcal{S}}$ is $d+1$.
\end{proposition}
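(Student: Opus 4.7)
The plan is to prove matching upper and lower bounds of $d+1$ on $\rank{C}$. Interpreting ``$d$-dimensional'' as the affine hull of $\state$ having dimension $d$, I would first note that the unit effect $u := \sum_b M_b$ coming from any measurement $M$ satisfies $u(s) = 1$ on all $s \in \state$, so the affine hull of $\state$ sits in a hyperplane missing the origin and consequently the linear span $V_\state \subseteq \mathcal{V}$ has dimension exactly $d+1$. For the upper bound, given any implementation $C_{ab} = M_b(s_a)$ of $C \in \Call{\state}$, pick a basis $v_1, \ldots, v_{d+1}$ of $V_\state$ and expand $s_a = \sum_i L_{ai} v_i$; setting $R_{ib} := M_b(v_i)$ yields the factorization $C = LR$ with $L \in \M{n}{d+1}$ and $R \in \M{d+1}{m}$, so $\rank{C} \le d+1$.

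For the lower bound I would construct a rank-$(d+1)$ example. Choose $d+1$ affinely independent states $s_1, \ldots, s_{d+1} \in \state$, which by the hyperplane observation above are automatically linearly independent in $V_\state$. Since $\state$ is compact and convex with affine hull avoiding $0$, its conic hull is pointed, so the effect cone spans the full $(d+1)$-dimensional dual of $V_\state$; in particular one can choose effects $e_1, \ldots, e_d$ such that $e_1, \ldots, e_d, u$ are linearly independent. Fix $\epsilon \in (0, 1/d]$ and define a $(d+1)$-outcome measurement by
\begin{equation*}
M_b = \epsilon \, e_b \quad (b \le d), \qquad M_{d+1} = u - \epsilon (e_1 + \cdots + e_d) .
\end{equation*}
The bound on $\epsilon$ together with $e_b(s) \le 1$ forces $M_{d+1}(s) \in [1 - d\epsilon,1] \subseteq [0,1]$, so $M_{d+1}$ is a valid effect; and the $M_b$ remain linearly independent because the $e_b$ and $u$ are.

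With these states and this measurement the $(d+1) \times (d+1)$ matrix $C_{ab} = M_b(s_a)$ is the image of the $s_a$ under the linear map $T : V_\state \to \R^{d+1}$ given by $T(v) = (M_1(v), \ldots, M_{d+1}(v))$. Linear independence of the $M_b$ in the dual of $V_\state$ forces $T$ to be injective, hence (by equality of dimensions) an isomorphism; thus $T$ sends the linearly independent $s_a$ to linearly independent rows of $C$, so $\rank{C} = d+1$.

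The upper bound is essentially mechanical once the dimension convention is fixed. The main obstacle is the measurement construction in the lower bound, where one must arrange the effects to be simultaneously linearly independent and to sum to $u$; the $\epsilon$-scaling trick above balances these two demands, and it relies on the effect cone being full-dimensional in the dual of $V_\state$, which in turn follows from pointedness of the conic hull of a compact convex state space whose affine hull avoids the origin.
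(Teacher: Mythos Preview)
Your proof is correct and follows essentially the same strategy as the paper's: both obtain the upper bound from the fact that $\state$ spans a $(d+1)$-dimensional space, and both realize the lower bound by pairing $d{+}1$ affinely independent states with a measurement consisting of $d{+}1$ linearly independent effects. The only minor differences are cosmetic---your $\epsilon$-scaling makes the existence of such a measurement explicit where the paper merely asserts it, and you verify full rank via the rows (states) whereas the paper argues via the columns (effects).
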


\begin{proof}
We first show that there exists a communication matrix with rank of $d+1$ and then proceed to show that every other communication matrix has a smaller or equal rank.

If $\state \subset \mathcal{V}$ is $d$-dimensional, i.e., $\dim(\mathrm{aff}(\state))=d$, where $\mathrm{aff}(\state)$ denotes the affine hull of $\state$, then we can take the vector space $\mathcal{V}$ to be $(d+1)$--dimensional and embed $\state$ in $\mathcal{V}$ such that $\mathrm{aff}(\state)$ is an affine hyperplane in $\mathcal{V}$ and $\state$ spans $\mathcal{V}$. It then follows that $\effect(\state) \subset \mathcal{V}^*$, where $\mathcal{V}^*$ denotes the dual of $\mathcal{V}$ and $\dim(\mathcal{V}^*) = \dim(\mathcal{V}) = d+1$, and furthermore one can confirm that $\effect(\state)$ spans $\mathcal{V}^*$.

Since $\state$ is $d$-dimensional there exists a maximal set of affinely independent states $\{s_1, \ldots, s_{d+1} \}$ such that 
\begin{equation}\label{eq:aff}
\mathrm{aff}\left(\{s_1, \ldots, s_{d+1} \} \right) = \mathrm{aff}(\state) \subset \mathcal{V}.
\end{equation}
Since $\mathrm{aff}(\state)$ is an affine hyperplane, there exists a functional $u \in \mathcal{V}^*$ such that $\mathrm{aff}(\state)= \{x \in \mathcal{V}\, | \, u(x) = 1 \}$. Thus, in particular $u$ is an interior point of the set of positive functionals on $\state$ and since $\effect(\state)$ is a subset of those and $\effect(\state)$ spans $\mathcal{V}^*$, there exists $d+1$ linearly independent effects $\{M_j\}_{j=1}^{d+1}$ such that $\sum_j M_j = u$ so that they form a measurement $M$ on $\state$. We consider the $(d+1) \times (d+1)$ communication matrix $C$ with $C_{ij} = M_j(s_i)$ for all $i,j= 1, \ldots, d+1$. Let us consider the linear dependence of the columns of $C$, i.e., let $\{\alpha_j\}_{j=1}^{d+1} \subset \R$ such that 
\begin{equation*}
\sum_{j=1}^{d+1} \alpha_j \left(M_j(s_1), \ldots, M_j(s_{d+1})\right) = \vec{0},
\end{equation*}
which implies that $\sum_j \alpha_j M_j(s_i)=0$ for all $i=1, \ldots, d+1$. Let $s \in \state$ be any state. By Eq. \eqref{eq:aff}, there exists real numbers $\{\mu_i\}_{i=1}^{d+1} \subset \R$ with $\sum_i \mu_i =1$ such that $s= \sum_i \mu_i s_i$. By applying the functional $\sum_j \alpha_j M_j$ on $s$, we find that $\sum_j \alpha_j M_j(s) = 0$. Because $s$ was an arbitrary state, and because $\state$ spans $\mathcal{V}$, we have that $\sum_j \alpha_j M_j(x) = 0$ for all $x \in \mathcal{V}$. Thus, $\sum_j \alpha_j M_j = 0$ and since $\{M_j\}_j$ is linearly independent, we must have that $\alpha_1= \cdots = \alpha_{d+1}=0$. Thus, $\rank{C} = d+1$.

Let then $D \in \Call{\state}$ be any $n\times m$ communication matrix generated by a $m$-outcome measurement $M'$ and states $\{s'_1, \ldots,s'_n\}$. If $m \leq d+1$ or $n \leq d+1$, then obviously $\rank{D} \leq d+1$. Let then $m> d+1$. Because $\dim(\mathcal{V}^*)=d+1$, the effects of $M'$ cannot be linearly independet, so that there exists an index $k \in \{1, \ldots,m\}$ such that $M'_k = \sum_{j\neq k} \beta_j M'_j$ for some set of real numbers $\{\beta_j\}_j$ of which not all of them are zero. Thus, looking at the $k$th column of $D$, we find that 
\begin{equation*}
\left(M'_k(s'_1), \ldots, M'_k(s'_n)\right) = \sum_{j\neq k} \beta_j \left(M'_j(s'_1), \ldots, M'_j(s'_n)\right),
\end{equation*}
so that the columns are not linearly independent. The same procedure can be continued to show that $D$ can have maximally $d+1$ linearly independent columns, i.e., $\rank{D} \leq d+1$.
\end{proof}

We can thus define the linear dimension $d_{lin}(\state)$ of a theory with a state space $\state$ as
\begin{align*}
d_{lin}(\state) := \sup\{ \rank{C} \, | \, C \in \Call{\state} \}.
\end{align*}
From the proof of the previous proposition we find the supremum is always attained and that if $\state$ is $d$-dimensional, i.e., $\dim(\mathrm{aff}(\state)) = d$, then $d_{lin}(\state) = d+1$. 

For example, for $d$-dimensional classical and quantum theory we have that $d_{lin}(\state^{cl}_d) = d$ and $d_{lin}(\qudit)= d^2$. We note that in this context, when we call classical or quantum state spaces $d$-dimensional, we mean that their operational dimension $d_{op}(\state)$, defined as the maximal number of distinguishable states, or equivalently, the maximal number $n$ such that $\id_n \in \Call{\state}$, is $d$. We note that for quantum state spaces the operational dimension coincides with the dimension of the underlying Hilbert space. We will see later how operational dimension itself is linked to a particular monotone.

\begin{example}
Considering the uniform (anti)distinguishability matrices introduced in Sec. \ref{sec:some}, we have $\rank{D_{n,\epsilon}}=n$ for all $\epsilon\neq 1-1/n$, and $\rank{D_{n,1-1/n}}=1$.
Therefore, for a fixed $n$ the rank shows that $D_{n,1-1/n} \ul D_{n,\epsilon}$ for $\epsilon\neq 1-1/n$ but nothing else than that.
For $n<m$ the rank shows that $\id_n$ does not ultraweakly majorize any $D_{m,\epsilon}$ with $\epsilon\neq 1-1/m$.
\end{example}

\subsection{Nonnegative rank}

We recall that a matrix $C$ is called \emph{nonnegative} if $C_{ij}\geq 0$ for all $i,j$.
The \emph{nonnegative rank} of a nonnegative matrix $C \in \M{n}{m}$, denoted as $\nrank{C}$, is defined as the the smallest number $k$ such there exists nonnegative matrices $L \in\M{n}{k} , R\in\M{k}{m}$ such that $LR=C$. 
As shown in \cite{CoRo93}, for a stochastic matrix $C\in\Mr{n}{m}$, the nonnegative matrices $L$ and $R$ can be chosen to be stochastic so that for $C \in \Mr{n}{m}$ we have
\begin{align} \label{eq:nrank1}
\nrank{C} = \min \{ k\in\nat \, | \,\exists L \in\Mr{n}{k} , R\in\Mr{k}{m}:  C=LR \} \, .
\end{align}
By noting that $C = LR = L \id_k R$, we see that we can express the nonnegative rank with respect to ultraweak matrix majorization as follows:
\begin{align}\label{eq:nrank2}
\nrank{C} = \min \{ k\in\nat \, | \, C \uleq \id_k  \}. 
\end{align}
From this expression it is obvious that $\nrank{\cdot}$ is an ultraweak monotone.

In comparison to the ordinary rank, we see from \eqref{eq:nrank1} that 
\begin{align}\label{eq:rank-ineq}
\rank{C} \leq \nrank{C} \leq \min(n,m) \, .
\end{align}
It follows that any $n\times n$ square matrix $C$ with $\rank{C}=n$ has $\nrank{C}=n$.

\begin{example}[Nonnegative rank of $G_{n,t}$] \label{ex:G42-nrank}
It has been shown in \cite{HeKe19} that $\rank{G_{n,t}}=n$ for all $1\leq t \leq n-1$. 
Since $G_{n,t}$ is a ${{n}\choose{t}} \times n$ matrix,  we see from Eq. \eqref{eq:rank-ineq} that $\nrank{G_{n,t}} = n$.
\end{example}

It was shown in \cite{CoRo93} that for a nonnegative matrix $A \in \M{n}{m}$ such that $\rank{A} \leq 2$, we have that $\nrank{A} = \rank{A}$. 
Especially, this with \eqref{eq:rank-ineq} means that $\nrank{A}=2$ if and only if $\rank{A}=2$.
Furthermore, it was proven in \cite{CoRo93} that if $A \in \M{n}{m}$ such that $n \in \{1,2,3\}$ or $m \in \{1,2,3\}$, then $\nrank{A} = \rank{A}$.
For a $4\times 4$ matrix we can already have $\nrank{C}>\rank{C}$; if \begin{align*}
K=\frac{1}{2} \begin{bmatrix}
1 & 1 & 0 & 0 \\
1 & 0 & 1 & 0 \\
0 & 1 & 0 & 1 \\
0 & 0 & 1 & 1 \\
\end{bmatrix}
\end{align*}
then $\rank{K}=3$ but $\nrank{K}=4$ \cite{CoRo93}.

The nonnegative rank of a communication matrix $C$ has an operational interpretation as the smallest classical system that can implement $C$; this is the content of the following proposition.
We recall that $\cld$ is the state space of a classical system with $d$ pure states, i.e., $d-1$ -simplex.

\begin{proposition}\label{prop:classical_communication_matrices}
For $C\in\Mrall$, the following are equivalent:
\begin{itemize}
\item[(i)] $\nrank{C}\leq n$
\item[(ii)] $C \uleq \id_n$
\item[(iii)] $C\in \Call{\mathcal{S}^{cl}_n}$
\end{itemize}
\end{proposition}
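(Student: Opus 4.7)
The plan is to prove the equivalences by a short cycle, using the tools already in place: equation \eqref{eq:nrank2}, which directly recasts nonnegative rank in terms of ultraweak majorization by $\id_k$, and the simplex structure of $\cln$.

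First, I would establish $(i) \Leftrightarrow (ii)$ by appealing to \eqref{eq:nrank2}. If $\nrank{C} \leq n$, then there exists $k \leq n$ with $C \uleq \id_k$. Since the inclusion $\id_k \uleq \id_n$ is evident (embed the $k \times k$ identity into the $n \times n$ identity by left-multiplying with an obvious row-stochastic $k \times n$ matrix and right-multiplying with an $n \times k$ row-stochastic matrix, or more directly append zero rows/columns, which preserves ultraweak equivalence by Proposition~\ref{prop:equivalent}(c)), transitivity of $\uleq$ gives $C \uleq \id_n$. Conversely, if $C \uleq \id_n$, then by \eqref{eq:nrank2} we have $\nrank{C} \leq n$ directly.

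Next, for $(ii) \Rightarrow (iii)$, I would use the fact recalled in Section~\ref{sec:uw} that $\Call{\state}$ is closed under ultraweak majorization. Since $\id_n \in \Call{\cln}$ (the $n$ pure states of the simplex are perfectly distinguishable), we conclude that any $C \uleq \id_n$ lies in $\Call{\cln}$.

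The one step that requires actual work is $(iii) \Rightarrow (ii)$. Assume $C \in \C{a}{b}{\cln}$ so that $C_{ij} = M_j(s_i)$ for some states $s_1, \ldots, s_a \in \cln$ and some $b$-outcome measurement $M$ on $\cln$. Denoting the pure states of $\cln$ by $e_1, \ldots, e_n$, every state $s_i$ admits a unique convex decomposition $s_i = \sum_{k=1}^n L_{ik} e_k$ with $L_{ik} \geq 0$ and $\sum_k L_{ik} = 1$; set $R_{kj} := M_j(e_k)$, which is nonnegative with $\sum_j R_{kj} = 1$ since $M$ is a measurement. Then $L \in \Mr{a}{n}$ and $R \in \Mr{n}{b}$, and by linearity
\begin{equation*}
C_{ij} = M_j(s_i) = \sum_{k=1}^n L_{ik}\, M_j(e_k) = \sum_{k=1}^n L_{ik} R_{kj} = (L \id_n R)_{ij},
\end{equation*}
so $C = L \id_n R$ and $C \uleq \id_n$. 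The only subtle point here, and what I would flag as the main step to present carefully, is that the simplex structure of $\cln$ is what guarantees both the nonnegativity of the coefficients in the state decomposition and the fact that any assignment $k \mapsto M_j(e_k)$ with the correct sum automatically corresponds to an effect; this is exactly what fails in non-classical state spaces and hence why the result isolates $\cln$.
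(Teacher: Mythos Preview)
Your proof is correct and follows essentially the same route as the paper. The paper handles (i)$\Leftrightarrow$(ii) by pointing to \eqref{eq:nrank2}, gets (ii)$\Rightarrow$(iii) from $\id_n\in\Call{\cln}$ together with downward closure of $\Call{\state}$ under $\uleq$, and for (iii)$\Rightarrow$(ii) invokes the simplex structure of $\cln$ (unique convex decomposition of states into pure states, every measurement a post-processing of the canonical distinguishing measurement); your explicit construction of $L_{ik}$ and $R_{kj}$ is exactly this argument written out in coordinates.
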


\begin{proof}
We have already seen that (i)$\Leftrightarrow$(ii).
The implication (ii)$\Rightarrow$(iii) follows from the fact that $\id_n\in\Call{\state^{cl}_n}$. 
It remains to show (iii)$\Rightarrow$(ii).
A classical theory $\mathcal{S}^{cl}_n$ has exactly $n$ pure states $s_1,\ldots,s_n$ and any mixed state has a unique convex decomposition into the pure states.
Further, any measurement is a post-processing of the $n$-outcome measurement $M$, defined as $M_b(s_a) = \delta_{ab}$ for all $a,b=1,\ldots,n$.
It follows that any communication matrix that has an implementation in $\cln$ can be obtained from $\id_n$ with classical processing on states and measurement outcomes. 
\end{proof}

Any communication matrix $C$ is ultraweakly majorized by $\id_n$ for some $n$; if $C\in\Mr{n}{m}$ we have $C\uleq\id_n$ and $C\uleq\id_m$. 
It makes thus sense to look for the minimal classical theory that can be used to implement all communication matrices in $\Call{\state}$. 
Thus, we set
\begin{align*}
d_{cl}(\state) &:=  \inf \{ k \in \nat \, | \, \Call{\state} \subseteq \Call{\state^{cl}_k} \}
\end{align*}
and by the previous proposition it follows that
\begin{align*}
d_{cl}(\state) &=  \inf \{ k \in \nat \, | \, \forall C \in \Call{\state}: C \uleq \id_k \} \\
&= \inf \{ k \in \nat \, | \, \forall C \in \Call{\state}: \nrank{C} \leq k \} \\
&= \sup\{ \nrank{C} \, | \, C \in \Call{\state} \}.
\end{align*}
We note that if $d_{cl}(\state) < \infty$, then the supremum in the last expression is always attained.

Compared with the operational dimension, we immediately see that $d_{cl}(\state) \geq d_{op}(\state)$ as if $\id_d \in \Call{\state}$ for any $d$, then $\nrank{\id_d} = d$. 
On the other hand, we observe that in general $d_{cl}(\state) \neq d_{op}(\state)$: let $\state= \qubit$, so that $d_{op}(\qubit) = 2$, but since $A_4 \in \Call{\qubit}$ and $\nrank{A_4}=4$, we have that $d_{cl}(\qubit)\geq 4$. 


\begin{remark}
In \cite{DaBrToBuVe17} the analogue of $d_{cl}(\state)$ was called the \emph{signalling dimension} of $\state$, but in their prepare-and-measure scenario also shared randomness was included in the protocol. Thus, with shared randomness the signalling dimension $d_s(\state)$ of a state space $\state$ can be expressed as 
\begin{align*}
d_{s}(\state) &:=  \min \{ k \in \nat \, | \, \forall m,n \in \nat: \Csr{n}{m}{\state} \subseteq \Csr{n}{m}{\state^{cl}_k} \} \, .
\end{align*}
In \cite{FrWe15} it was shown that $\Csr{n}{m}{\qudit} = \Csr{n}{m}{\state^{cl}_d} $ for all $n,m,d \in \nat$ so that $d_s(\qudit) = d_s(\state^{cl}_d) = d = d_{op}(\qudit) = d_{op}(\state^{cl}_d)$. 
As we saw earlier, without shared randomness (i.e. without the convex hulls) we have $\C{n}{m}{\qudit} \neq \C{n}{m}{\state^{cl}_d}$ and therefore $d_{cl}(\qudit) \neq d_{cl}(\state^{cl}_d)$.
\end{remark}

\subsection{Positive semidefinite rank}

The \emph{positive semidefinite rank} of a nonnegative matrix $C \in \M{n}{m}$, denoted as $\psdrank{C}$, is defined as the smallest integer $k$ such that there exist positive semidefinite $k\times k$ matrices $A_1,\ldots,A_n$ and $B_1,\ldots,B_m$ such that $C_{ij}=\tr{A_i B_j}$.
This is called a \emph{positive semidefinite decomposition} of $C$.

\begin{proposition}
The positive semidefinite rank $\psdrank{\cdot}$ is an ultraweak monotone.
\end{proposition}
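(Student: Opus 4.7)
The plan is to unpack the definitions and show directly that a psd decomposition of $D$ of size $k$ can be pushed through the ultraweak majorization $C \uleq D$ to yield a psd decomposition of $C$ of the \emph{same} size. Concretely, suppose $C \uleq D$, so there exist row-stochastic matrices $L \in \Mr{n}{p}$ and $R \in \Mr{q}{m}$ with $C = LDR$, and let $k = \psdrank{D}$ with a witnessing decomposition $D_{ij} = \tr{A_i B_j}$, where $A_1,\ldots,A_p$ and $B_1,\ldots,B_q$ are positive semidefinite $k \times k$ matrices.

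The key construction is to define new operators by taking the same stochastic combinations that define $L$ and $R$:
\begin{equation*}
A'_a = \sum_{i} L_{ai} A_i, \qquad B'_b = \sum_{j} R_{jb} B_j.
\end{equation*}
Since the entries of $L$ and $R$ are nonnegative and the cone of positive semidefinite matrices is closed under nonnegative linear combinations, each $A'_a$ and each $B'_b$ is again a $k \times k$ positive semidefinite matrix. Using bilinearity of the trace,
\begin{equation*}
\tr{A'_a B'_b} = \sum_{i,j} L_{ai} R_{jb} \tr{A_i B_j} = \sum_{i,j} L_{ai} D_{ij} R_{jb} = (LDR)_{ab} = C_{ab}.
\end{equation*}
This exhibits $\{A'_a\}_a, \{B'_b\}_b$ as a positive semidefinite decomposition of $C$ of size $k$, so by definition $\psdrank{C} \leq k = \psdrank{D}$, which is exactly the monotonicity claim.

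I do not expect any real obstacle here: the argument is essentially the same one that shows nonnegative rank is an ultraweak monotone, with ``sum of nonnegative matrices is nonnegative'' replaced by ``nonnegative combination of psd matrices is psd.'' The row-stochasticity of $L$ and $R$ is in fact stronger than what is needed; only their nonnegativity is used, which mirrors the observation that the psd rank, like the nonnegative rank, is a cone-theoretic notion attached to the psd cone rather than to probability-preserving maps.
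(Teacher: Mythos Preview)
Your proof is correct and essentially identical to the paper's own argument: both push a size-$k$ psd decomposition of $D$ through $C=LDR$ by setting $A'_a=\sum_i L_{ai}A_i$ and $B'_b=\sum_j R_{jb}B_j$, noting these are psd as nonnegative combinations of psd matrices, and verifying $\tr{A'_aB'_b}=C_{ab}$ by bilinearity. Your closing remark that only nonnegativity (not stochasticity) of $L,R$ is used is a nice extra observation not made explicit in the paper.
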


\begin{proof}
Let $C,D\in\Mrall$ and $C\uleq D$. 
Let us assume that $D$ has a positive semidefinite decomposition $D_{ij}=\tr{A_i B_j}$.
Since $C\uleq D$, there exist $L,R\in\Mrall$ such that $C=LDR$.
Then
\begin{align*}
C_{k\ell} &= \sum_{i,j} L_{ki} D_{ij} R_{j\ell} = \sum_{i,j} L_{ki} \tr{A_iB_j} R_{j\ell} \\
&= \tr{(\sum_i L_{ki}A_i)(\sum_j R_{j\ell}B_j)} = \tr{A'_kB'_\ell} \, ,
\end{align*}
where $A'_k=\sum_i L_{ki}A_i$ and $B'_\ell=\sum_j R_{j\ell}B_j$. 
The matrices $A'_k$ and $B'_j$ are positive semidefinite matrices of the same size as $A_i$ and $B_j$.
\end{proof}

In \cite{LeWeWo17} it was shown that $\psdrank{C}$ gives the dimension of the quantum system needed to produce the communication matrix $C$; the following result should be compared with Prop. \ref{prop:classical_communication_matrices}.

\begin{proposition}\label{prop:psdrank-quantum}\cite[Lemma 5]{LeWeWo17} Let $C \in \Mrall$. Then $C \in \Call{\qudit}$ if and only if $\psdrank{C} \leq d$.
\end{proposition}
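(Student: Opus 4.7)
The plan is to prove both implications directly from the definitions, with the only subtlety being a normalization step in the backward direction. For the forward implication, suppose $C\in\Call{\qudit}$, so there exist states $\varrho_1,\ldots,\varrho_n$ and a POVM $\{M_1,\ldots,M_m\}$ on $\hi_d$ with $C_{ab}=\tr{\varrho_a M_b}$. Since $\varrho_a$ and $M_b$ are positive semidefinite $d\times d$ matrices, this is itself a positive semidefinite decomposition of $C$ of size $d$, so $\psdrank{C}\leq d$.

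For the converse, suppose $\psdrank{C}\leq d$, so that $C_{ab}=\tr{A_a B_b}$ for positive semidefinite $d\times d$ matrices $A_a$ and $B_b$. The task is to modify $(A_a,B_b)$ into a genuine prepare-and-measure implementation, i.e.\ states $\varrho_a$ with $\tr{\varrho_a}=1$ and effects $M_b$ with $\sum_b M_b=\id$, without altering the values $\tr{A_a B_b}$. The key observation is that row-stochasticity of $C$ gives $\tr{A_a Q}=1$ for every $a$, where $Q:=\sum_b B_b$ is positive semidefinite. When $Q$ is invertible, I would set
\begin{equation*}
\varrho_a := Q^{1/2} A_a Q^{1/2}, \qquad M_b := Q^{-1/2} B_b Q^{-1/2},
\end{equation*}
and then the cyclicity of the trace yields $\tr{\varrho_a M_b}=\tr{A_a B_b}=C_{ab}$, while $\sum_b M_b=Q^{-1/2}QQ^{-1/2}=\id$ and $\tr{\varrho_a}=\tr{A_aQ}=1$ fall out directly.

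The main obstacle is therefore the case in which $Q$ is not invertible, where $Q^{-1/2}$ does not exist. I would handle this by passing to the range of $Q$: if $v$ lies in the kernel of $Q$ then $\sum_b \langle v|B_b v\rangle = \langle v|Qv\rangle = 0$, and positivity of each $B_b$ forces $B_b v=0$, so every $B_b$ (and hence $Q$) leaves $\mathrm{range}(Q)$ invariant. Restricting the construction above to this subspace of dimension $r=\rank{Q}\leq d$ produces an implementation on $\hi_r$; embedding $\hi_r$ as a subspace of $\hi_d$ and extending $\varrho_a$ by zero leaves their trace equal to $1$, while absorbing the missing $\id-P$ (where $P$ is the projection onto $\mathrm{range}(Q)$) into any single effect $M_b$ preserves positivity and completeness of the POVM and does not change any $\tr{\varrho_a M_b}$, since each $\varrho_a$ is supported on $\mathrm{range}(Q)$. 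This produces the desired element of $\Call{\qudit}$.
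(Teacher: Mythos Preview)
Your argument is correct. The paper does not give its own proof of this proposition but quotes it from \cite{LeWeWo17}; the $Q^{1/2}/Q^{-1/2}$ normalization you use, together with the restriction to $\mathrm{range}(Q)$ in the singular case, is precisely the standard argument found in that reference.
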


We denote
\begin{align*}
d_q(\state) &:= \inf\{ d  \in \nat \, | \, \Call{\state} \subseteq \Call{\qudit} \}
\end{align*}
and then $d_q(\state)$ is the minimum dimension of a quantum system needed to produce all of the communication matrices on $\state$.
We see from Prop. \ref{prop:psdrank-quantum} that
\begin{align*}
d_q(\state) &= \inf \{d \in \nat \, | \, \forall C \in \Call{\state}: \psdrank{C} \leq d \} \\
&= \sup \{ \psdrank{C} \, | \, C \in \Call{\state} \}.
\end{align*}
Similarly to the classical dimension, if the quantum dimension is finite, then supremum in the last expression is attained for some communication matrix.

It has been shown in \cite{GoPaTh13} that
\begin{align}\label{eq:psd-rank-ineq}
\sqrt{\rank{C}} \leq  \psdrank{C} \leq \nrank{C} \, .
\end{align}
Namely, the second inequality follows from the observation that the nonnegative rank corresponds to the minimal positive semidefinite decomposition with the extra requirement that the matrices in the decomposition are diagonal. 
The first inequality is easy to confirm in the case when $C$ is a stochastic matrix; if $C \in \Mrall$ and $\psdrank{C} = d$, then from Prop. \ref{prop:psdrank-quantum} it follows that $C \in \Call{\qudit}$ so that by Prop. \ref{prop:maximal-rank} we must have $\rank{C} \leq d^2$ from which the claim follows.

In the following we derive a lower bound for communication matrices of a special type.

\begin{proposition}\label{prop:psdrank-kernel}
Let $C$ be a communication matrix such that all rows are different.
If there is a column with at least two zeros and at least one nonzero element, then $\psdrank{C}\geq 3$.
\end{proposition}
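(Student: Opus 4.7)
The plan is to eliminate $\psdrankw(C) = 1$ and $\psdrankw(C) = 2$ in turn. The first case is immediate: a size-$1$ psd decomposition gives $C_{ij} = a_i b_j$ with $a_i, b_j \geq 0$, so $C$ has ordinary rank at most $1$; row-stochasticity then forces all rows of $C$ to be equal to the common normalized vector $(b_1,\ldots,b_m)/\sum_j b_j$, contradicting the hypothesis that the rows are distinct. Note that the column hypothesis is not needed here.

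For the main case, I would suppose toward contradiction that $C_{ij} = \tr{A_i B_j}$ with $2 \times 2$ positive semidefinite matrices $A_i, B_j$, and then fix a column $j_0$ together with indices $i_1, i_2, i_3$ as supplied by the hypothesis, so that $C_{i_1 j_0} = C_{i_2 j_0} = 0$ and $C_{i_3 j_0} > 0$. The main tool is the standard fact that for PSD matrices $A, B$ one has $\tr{AB} = 0$ if and only if $AB = 0$, equivalently $\mathrm{range}(A) \perp \mathrm{range}(B)$. Since $\tr{A_{i_3} B_{j_0}} > 0$ we know $B_{j_0} \neq 0$, and the two vanishing entries give $A_{i_1} B_{j_0} = A_{i_2} B_{j_0} = 0$, which rigidly constrains $A_{i_1}$ and $A_{i_2}$.

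I then case-split on $\rankw(B_{j_0})$. If it equals $2$, then $B_{j_0}$ is invertible, so $A_{i_1} = A_{i_2} = 0$ and rows $i_1, i_2$ of $C$ are identically zero, contradicting row-stochasticity. If it equals $1$, write $B_{j_0} = b \kb{\psi}{\psi}$; then $\ket\psi \in \ker A_{i_k}$ for $k=1,2$. In dimension two, a nonzero PSD matrix whose kernel contains $\ket\psi$ is necessarily of the form $\alpha \kb{\psi^\perp}{\psi^\perp}$ with $\alpha > 0$. Neither $A_{i_k}$ can be zero, as that would kill the whole $i_k$-th row of $C$, so we get $A_{i_1} = \alpha_1 \kb{\psi^\perp}{\psi^\perp}$ and $A_{i_2} = \alpha_2 \kb{\psi^\perp}{\psi^\perp}$. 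Consequently the rows $C_{i_1\cdot}$ and $C_{i_2\cdot}$ are scalar multiples of each other; since both sum to $1$ we must have $\alpha_1 = \alpha_2$ and the rows coincide, contradicting distinctness.

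The only non-bookkeeping ingredients are the PSD orthogonality characterization ($\tr{AB}=0 \Leftrightarrow AB=0$ for $A,B\succeq 0$) and the rigid classification of $2\times 2$ PSD matrices by their kernel direction; everything else is a systematic use of row-stochasticity to upgrade zeros in one column and scalar proportionality of rows into genuine equality of rows. I do not anticipate a serious obstacle beyond keeping the two-dimensional geometry straight in the rank-$1$ subcase.
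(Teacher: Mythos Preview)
Your proof is correct and follows essentially the same route as the paper's. The only cosmetic difference is that the paper invokes the equivalence $\psdrank{C}\leq 2 \Leftrightarrow C\in\Call{\qubit}$ to normalize the $A_i$ to density matrices from the outset, so that ``rows distinct $\Rightarrow$ $\varrho_i\neq\varrho_j$'' is immediate and the rank-$1$ kernel argument finishes in one line; your version works directly with an unnormalized PSD decomposition and recovers the same conclusion via row-stochasticity, which also lets you dispose of the $\psdrankw=1$ case explicitly rather than absorbing it into the qubit implementation.
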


\begin{proof}
Suppose that $\psdrank{C} = 2$, i.e.,  there exists a quantum implementation such that $C_{ij} = \tr{\varrho_i \Mo(j)}$, where the states and effects are 2-dimensional. As all rows of $C$ are different we know that $\varrho_i \neq \rho_j$ for $i \neq j$. Let $k$ be the column with (at least) two zeros, say $\tr{\varrho_i \Mo(k)} = \tr{\varrho_j \Mo(k)} = 0$. We also know that there exists an index $h$ such that $\tr{\varrho_h \Mo(k)} \neq 0$, which means that $\Mo(k) \neq 0$. However, a 2-dimensional nonzero effect has at most 1-dimensional kernel, so it is impossible for both $\tr{\varrho_i \Mo(k)}$ and $\tr{\varrho_j \Mo(k)}$ to be zero.
\end{proof}

\begin{example}\label{ex:G42-psdrank}
We note that $\rank{A_3}=\nrank{A_3}=3$ but $\psdrank{A_3}=2$. More generally, it is known that $\psdrank{A_{d^2}} = d$ whenever $d$ is odd and $\psdrank{A_{d^2-1}} = d$ whenever $d$ is even \cite{LeWeWo17}. 
It is also known that there can be at most $d^2$ uniformly antidistinguishable qudit states \cite{HeKe19}. Note that the psd-factorization of, say, $A_n$ also gives a psd-factorization for $A_{n-1}$. This is due to the fact that we can drop one state and one effect from the decomposition and obtain a psd-decomposition for $A_{n-1}$ of the same size as for $A_n$. It is therefore guaranteed that if $A_n$ has a qudit implementation, then $A_{n-1}$ also has a qudit implementation of the same dimension. We can hence make the following chain of conclusions. Let $d$ be odd. Then $\psdrank{A_{d^2}}$ is $d$. Now every $\psdrank{A_{d^2-k}} = d$ until reaching a $k$ such that $d^2-k = (d-1)^2$. Now $d-1$ is even so it is not known whether $\psdrank{A_{(d-1)^2}}$ is $d$ or $d-1$. However, as stated above, it is known that $\psdrank{A_{(d-1)^2-1}} = d-1$ and this chain can be continued until reaching $A_2$. 
We can also make the remark that, as was shown in \cite{HeKe19}, a SIC-POVM always gives an implementation for uniform antidistinguishability. As SIC-POVMs are known to exist for up to $d=151$ \cite{FuHoSt17}, we know that $\psdrank{A_{d^2}} = d$ for up to $d=151$ regardless of whether $d$ is even or odd.

For the matrices $G_{n,t}$ we know that $\psdrank{G_{n,t}} \leq n$ because $G_{n,t} \uleq \id_n$. For $1<t<n$ we are not aware of other useful bounds,  besides Eq. \eqref{eq:psd-rank-ineq}. From Eq. \eqref{eq:psd-rank-ineq} together with Example \ref{ex:G42-nrank} and Prop. \ref{prop:psdrank-kernel} we see that $\psdrank{G_{4,2}} \in \{3,4\}$. In Ref. \cite{HeKe19} it was shown that $\psdrank{G_{4,2}}=3$.
\end{example}

We can develop the method of obtaining a bound for the positive semidefinite rank from a larger matrix a bit further. Let $C \in \mathcal{M}^{row}_{a,b}$ be a communication matrix and $\psdrank{C}=n$. We see that if $D$ is a communication matrix such that $tD$ is a submatrix of $C$ for some $t>0$, then $\psdrank{D} \leq n$. Namely, it is clear that the positive semidefinite factorization for $D$ is obtained by dropping some matrices from the factorization $C_{ij} = \tr{A_i B_j}$ and renormalizing by a suitable constant $t$. This then quarantees a positive semidefinite factorization, and hence a quantum implementation, of the same size for $D$ as for $C$. As was stated in the previous example, this is exactly the case with $A_n$ and $A_{n-1}$, since $(n-2)/(n-1) A_{n-1}$ is a submatrix of $A_n$, where any row $j$ and column $j$ of $A_n$ can be dropped in order to obtain $(n-2)/(n-1) A_{n-1}$. 

\begin{example}
For a communication matrix $C$, let us define  $\lambda_{max}(C) := \sum_j \max_i (C_{ij})$. In \cite{LeWeWo17} it was shown that 
\begin{equation}\label{eq:psdrank-lmax}
\psdrank{C} \geq \lmax{C}.
\end{equation}
 We are interested in the case when a communication matrix $C$ does not have a qubit implementation, i.e., when $\psdrank{C} \geq 3$. By the previous inequality, if $\lmax{C}> 2$, then $\psdrank{C}\geq 3$. Let us next demonstrate how Prop. \ref{prop:rank1}  can be used to get another lower bound for a specific class of communication matrices $C \in \Call{\state}$ such that $\lmax{C}=2$. We demonstrate this with an example.

Let us consider two communication matrices
\begin{equation*}
C = \frac{1}{4} \begin{bmatrix}
2 & 1 & 1 & 0 \\
0 & 2 & 1 & 1 \\
1 & 0 & 2 & 1 \\
1 & 1 & 0 & 2 
\end{bmatrix}, \quad
D = \frac{1}{4} \begin{bmatrix}
2 & 0 & 1 & 1 \\
0 & 2 & 1 & 1 \\
1 & 1 & 2 & 0 \\
1 & 1 & 0 & 2 
\end{bmatrix}.
\end{equation*}
Suppose that $C \in \Call{\qubit}$. Since $\lmax{C} = \tr{C} =2$, Prop. \ref{prop:rank1} shows that in this case $C$ must have an implementation with a $4$-outcome rank-1 POVM $\Mo$ and $4$ pure states $\{\varrho_1, \varrho_2, \varrho_3, \varrho_4\}$ such that for each effect $\Mo(j)$ we have that $\varrho_j$ is the unique eigenstate of $\Mo(j)$ with eigenvalue $r_j=1/2$.  Let us consider the first column of $C$. Since $\Mo(1)$ is rank-1 and the eigenstates corresponding to the maximal and minimal eigenvalues of $\Mo(1)$ are  $\varrho_1$ and $\varrho_2$ respectfully, we must have that $\varrho_1$ and $\varrho_2$ are antipodal states on the Bloch spehere. Since the eigenstate corresponding to the maximal eigenvalue of $\Mo(2)$ is $\varrho_2$, and since $\varrho_1$ and $\varrho_2$ are antipodal, we should have that also $\Mo(1)$ and $\Mo(2)$ have antipodal Bloch representation so that $\varrho_1$ is the eigenstate of $\Mo(2)$ corresponding to the eigenvalue zero and $C_{12}=0$. However, this is not the case and thus $C \notin \Call{\qubit}$ and $\psdrank{C} \geq 3$. We note that the same arguments can be applied to square communication matrices similar to $C$.

However, for $D$ we see that the same arguments do not apply and indeed one has a qubit implementation for $D$ using two pairs of antipodal effects/states:
\begin{align*}
\Mo(1) = \frac{1}{4}(\id_2 + \sigma_x ), \quad \Mo(2) = \frac{1}{4}(\id_2 - \sigma_x)  \\
\Mo(3) = \frac{1}{4}(\id_2 + \sigma_y ), \quad \Mo(4) = \frac{1}{4}(\id_2 - \sigma_y) 
\end{align*}
with $\varrho_i = 2 \Mo(i)$ for all $i \in \{1,2,3,4\}$. Thus, $\psdrank{D}=2$.
\end{example}

\subsection{Min and Max monotones}\label{subsec:minmax}

In the following we introduce two easily computable ultraweak monotones.

\begin{proposition}
The functions
\begin{align*}
\lambda_{max}(C) := \sum_j \max_i (C_{ij}), \qquad \lambda_{min}(C) := -\sum_j \min_i (C_{ij})
\end{align*}
are ultraweak monotones.
\end{proposition}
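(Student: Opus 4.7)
The plan is to unpack the assumption $C \uleq D$ as $C = LDR$ with $L, R$ row-stochastic, write $C_{ij} = \sum_{k,\ell} L_{ik} D_{k\ell} R_{\ell j}$, and estimate the column-maxima (respectively column-minima) of $C$ in terms of those of $D$ in two stages: first peel off the left factor $L$ using convexity, then peel off the right factor $R$ using the fact that its rows sum to $1$.

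For $\lmaxw$, fix a column index $j$ and set $E_{kj} := (DR)_{kj} = \sum_\ell D_{k\ell} R_{\ell j}$. Then $C_{ij} = \sum_k L_{ik} E_{kj}$ is, for each $i$, a convex combination of the numbers $E_{1j},\ldots,E_{cj}$ because $L_{ik}\geq 0$ and $\sum_k L_{ik} = 1$. Hence $\max_i C_{ij} \leq \max_k E_{kj}$. For the second stage, I would bound $E_{kj} = \sum_\ell R_{\ell j} D_{k\ell} \leq \sum_\ell R_{\ell j} \max_{k'} D_{k'\ell}$, sum over $j$, and exchange the order of summation to get $\sum_j \max_k E_{kj} \leq \sum_\ell (\max_{k'} D_{k'\ell}) \sum_j R_{\ell j} = \sum_\ell \max_{k'} D_{k'\ell} = \lmax{D}$, where I used $\sum_j R_{\ell j} = 1$. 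Chaining the two inequalities yields $\lmax{C} \leq \lmax{D}$.

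The argument for $\lminw$ is the mirror image: the same convexity gives $\min_i C_{ij} \geq \min_k E_{kj}$, and the second stage gives $\sum_j \min_k E_{kj} \geq \sum_\ell \min_{k'} D_{k'\ell}$ by the same swap-of-sums trick. Multiplying by $-1$ reverses the inequality and produces $\lmin{C} \leq \lmin{D}$, which is exactly monotonicity of $\lminw$ as defined.

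There is no real obstacle here: both monotones reduce to the single observation that multiplying on either side by a row-stochastic matrix is a convex operation on columns (from the left) and a convex operation on rows (from the right, after transposing one's point of view), so extreme column values can only shrink toward zero in magnitude. The only small care needed is to keep track of which of $L$ and $R$ is being applied at each step, since they act asymmetrically on the column-maximum.
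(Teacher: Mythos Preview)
Your proof is correct and follows essentially the same approach as the paper: both arguments use that left-multiplication by a row-stochastic matrix produces convex combinations of column entries (so column extrema can only contract), and that right-multiplication by a row-stochastic matrix redistributes columns with nonnegative weights whose row-sums equal one (so the sum of column extrema cannot increase). The only cosmetic difference is the order in which the two factors are handled---the paper peels off $R$ first and then $L$, whereas you do the reverse.
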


\begin{proof}
Let $C \uleq D$ so that there exist $L,R\in\Mrall$ such that $C = L D R$.
For $\lambda_{max}$ we then have that
\begin{align*}
\lambda_{max}(C) &= \sum_j \max_i C_{ij} = \sum_j \max_i \left( \sum_{m,n} R_{mj} L_{in} D_{nm} \right) \\
&= \sum_{j,m} R_{mj} \left[ \max_i \left(  \sum_n L_{in} D_{nm} \right) \right] = \sum_m \max_i \left( \sum_n L_{in} D_{nm} \right) \\
&\leq \sum_m \max_i \left( \sum_n L_{in} \max_k D_{km} \right) = \sum_m \max_i \max_k D_{km} \\
&= \sum_m \max_k D_{km} = \lambda_{max}(D) \, .
\end{align*}
The proof for $\lambda_{min}$ is analogous.
\end{proof}

Similarly to the previous monotones, we can try to use $\lambda_{max}$ and $\lambda_{min}$ to give us insight to not just some particular communication matrix but to a whole theory. 
This also clarifies their physical meaning.
We first set
\begin{align}
\lambda_{max}(\state) &:= \sup \{ \lambda_{max}(C) \, | \, C \in \Call{\state} \} \, .
\end{align}
We denote by $\obss$ the set of all measurements in $\state$ and by $\Omega_M \subset \nat$ the (finite) outcome set of a measurement $M \in \obss$. Now we can reformulate the previous quantity as follows:
\begin{align}
\lambda_{max}(\state) &= \sup \{ \lambda_{max}(C) \, | \, \exists M \in \obss, \{ s_i \}_i \subset \state : C_{ij} = M_j(s_i) \ \forall i,j \}  \nonumber \\
&= \sup \left\lbrace \sum_{j \in \Omega_M} \max_i M_j(s_i) \, | \, M \in \obss, \{ s_i \}_i \subset \state \right\rbrace \nonumber \\
&= \sup \left\lbrace \sum_{j \in \Omega_M} \max_{s \in \state} M_j(s) \, | \, M \in \obss \right\rbrace .
\end{align}

Let us denote $\lmax{M} := \sum_{j \in \Omega_M} \max_{s \in \state} M_j(s)$ for a measurement $M \in \obss$ so that $\lambda_{max}(\state) = \sup_{M \in \obss} \lmax{M}$. 
For a given measurement $M$, the quantity $\lmax{M}$ is related to some minimal error discrimination and decoding tasks: If Alice encodes $n$ equally likely messages into same amount of states and Bob decodes this by performing a $n$-outcome measurement, Bob's probability of error $P^n_E$ is bounded by 
\begin{equation}
P^n_E \geq 1- \dfrac{\lambda_{max,n}(\state)}{n},
\end{equation}
where 
\begin{equation*}
\lambda_{max,n}(\state) = \sup \{ \lmax{M} \, | \, M \in \obss: \ \#\Omega_M \leq n \}.
\end{equation*}
Thus, for a given measurement $M$ with $n$ outcomes, we can interpret $\lmax{M}/n$ as the decoding power of $M$ as it tells us the optimal decoding probability that can be achieved using $M$. Furthermore, $\lambda_{max, n}(\state)/n$ gives us the optimal success probability for a minimal error discrimination of $n$ states in the whole theory. 
We note that 
\begin{equation*}
\lmax{\state} = \sup_{M \in \obss} \lmax{M} = \sup_{n \in \nat} \lambda_{max, n}(\state) \, .
\end{equation*}

\begin{example}\label{ex:lmax-quantum}
In quantum theory all implementable communication matrices have the form
\begin{equation}
C_{ij}= \tr{\varrho_i \Mo(j)} \, ,
\end{equation} 
where $\varrho_i$ are density matrices and $\Mo$ is a POVM.
It follows that for $\qudit$ we have the upper bound
\begin{equation}
\lambda_{max}(C)= \sum_j \max_i \tr{\varrho_i \Mo(j)} \leq \sum_j \tr{\Mo(j)}=d \, ,
\end{equation} 
so that $\lmax{\qudit} \leq d= d_{op}(\qudit)$. Thus, we get the \emph{basic decoding theorem} \cite{QPSI10} in $d$-dimensional quantum theory: $P^N_E \geq 1- \frac{d}{N}$.
\end{example}

\begin{remark}\label{lambdamax}
In \cite{MaKi18}, $\lmax{\state}$ was called the \emph{information storability of $\state$} (denoted by $\mathfrak{n}$ in their work), and they were able to prove that it is related to the point-asymmetry of the state space: the information storability $\mathfrak{n}$ of $\state$ is related to the amount of asymmetry of $\state$ given by the (affine-invariant) Minkowski measure $\mathfrak{m}$ by the relation $\mathfrak{m} = \mathfrak{n}-1$. Since the Minkowski measure $\mathfrak{m}$ gives value $1$ only for point-symmetric state spaces, their result showed that $\lmax{\state} = \mathfrak{n} =2$, i.e., the state space can store 1 bit of information if and only if the state space is point-symmetric. 
\end{remark}

As was shown in Example \ref{ex:lmax-quantum}, in quantum theory $\lmax{\qudit}$ is bounded by the operational dimension $d$. The same can be shown for $d$-dimensional classical theory as well. By the results of \cite{MaKi18}, we see that this is not the case in general. In fact they show that in the so-called pentagon state space $\state_5$ we have $\lmax{\state_5} = 1+ 1/\cos(\pi/5) \approx 2.24 >2 = d_{op}(\state_5) $.

We note that by considering $\lambda_{min}$ similar way for the whole theory is not as useful since $\lambda_{min}(C) \in [-1,0]$ for all $C \in \Call{\state}$, where the minimum value $-1$ is obtained by any trivial measurement and the maximum value $0$ is obtained by any $\id_k$ or $A_n$ such that $\id_k, A_n \in \Call{\state}$.

\subsection{Distinguishability monotone}

For $C\in\Mrall$, we define
\begin{equation*}
\iota(C) := \max \{ n \, | \, \id_n \uleq C \} \, . 
\end{equation*}
It is obvious that $\iota$ is an ultraweak monotone.
Physically, $\iota$ describes how many messages we can send if we can implement $C$.
In $\Call{\state}$, the minimal value of $\iota$ is $1$ and the maximal value is $d_{op}(\state)$. 

\begin{proposition} \label{prop:iota}
For $C \in \Mrall$ we have that $\iota(C)=k$ if and only if the maximal number of orthogonal rows of $C$ is $k$.
\end{proposition}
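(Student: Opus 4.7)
The plan is to prove both directions of the equivalence by characterizing $\iota(C) \geq k$ in terms of orthogonality of rows, where two rows of a stochastic matrix are orthogonal iff they have disjoint supports.

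For the easy direction, I would show that if $C$ has $k$ rows $C_{i_1},\ldots,C_{i_k}$ with pairwise disjoint supports, then $\id_k \uleq C$. Define $L \in \Mr{k}{n}$ by $L_{ji_j}=1$ (and $0$ otherwise), so that $LC$ has the $k$ chosen rows stacked. Now define $R \in \Mr{m}{k}$ column-by-column: for each column $b$ of $LC$, there is at most one row in which it is nonzero (by disjointness of supports); if row $j$ is the unique such row, set $R_{bj}=1$, and otherwise (the column is zero) set $R_{b1}=1$. A direct computation, using row-stochasticity of $LC$ to obtain the diagonal entries and disjointness to kill the off-diagonal entries, gives $LCR = \id_k$.

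For the harder direction, suppose $\id_k \uleq C$, i.e., $\id_k = LCR$ with $L\in\Mr{k}{n}$, $R\in\Mr{m}{k}$. Let $u_i := L_i C$ be the $i$-th row of $LC$; since $L$ is row-stochastic, each $u_i$ is a convex combination of the rows of $C$, hence row-stochastic itself, and $\sum_b (u_i)_b R_{bj} = \delta_{ij}$. Let $S_i := \{b : (u_i)_b > 0\}$. From $\sum_b (u_i)_b R_{bi}=1=\sum_b (u_i)_b$ together with $R_{bi}\leq 1$, I conclude $R_{bi}=1$ for all $b \in S_i$; then row-stochasticity of $R$ forces $R_{bj}=0$ for $j\neq i$, so each $b\in S_i$ fails to lie in $S_j$ for $j\neq i$. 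Hence $S_1,\ldots,S_k$ are pairwise disjoint.

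Now I extract the orthogonal rows of $C$. For each $i$ let $A_i := \{a : L_{ia}>0\}$, which is nonempty because $L_i$ is a probability vector. Pick any $a_i \in A_i$; then the support of row $C_{a_i}$ is contained in $S_i$, because $L_{ia_i} C_{a_i b} \leq (u_i)_b$. Since the $S_i$'s are disjoint and each row $C_{a_i}$ has nonempty support (as $C$ is row-stochastic), the indices $a_1,\ldots,a_k$ are automatically distinct and the selected rows of $C$ are pairwise orthogonal. Combining both directions shows that $\iota(C)$ equals the maximal number of pairwise orthogonal rows of $C$.

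The only step requiring genuine care is the deduction that the $S_i$ are disjoint; everything else is bookkeeping. The crucial ingredient there is that $R$ being row-stochastic (not just nonnegative) forces $R_{bi}=1$ on the support of $u_i$, which is what ultimately propagates the orthogonality from $\id_k$ back to rows of $C$.
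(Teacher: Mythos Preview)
Your proof is correct. The easy direction is essentially the same as the paper's (the paper distributes columns outside all supports uniformly via $R_{bj}=1/k$ rather than sending them to column $1$, but this is immaterial). For the harder direction, however, your route is genuinely different and arguably cleaner. The paper starts from the orthogonality of the rows of $\id_k$, expands the inner product $\overrightarrow{(LCR)}_i\cdot\overrightarrow{(LCR)}_j=0$ as a nonnegative double sum, and peels off layers: first deducing that the rows of $LC$ are pairwise orthogonal, then expanding $\overrightarrow{(LC)}_i\cdot\overrightarrow{(LC)}_j=0$ once more to locate orthogonal rows of $C$, with a separate argument that the selected row indices are distinct. Your argument instead exploits the \emph{diagonal} constraint $\sum_b (u_i)_b R_{bi}=1=\sum_b(u_i)_b$: combined with $R_{bi}\leq 1$ this forces $R_{bi}=1$ on the support $S_i$ of $u_i$, and row-stochasticity of $R$ then gives disjointness of the $S_i$ in one stroke. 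This saturation trick avoids the two-stage inner-product expansion and makes the propagation of orthogonality from $\id_k$ back to $C$ more transparent; the paper's approach, on the other hand, is more symmetric in $L$ and $R$ and never needs to single out the diagonal entries.
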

\begin{proof}
First we will show that $C \in \Mr{n}{m}$ has $k$ orthogonal rows if and only if there exist stochastic matrices $L \in \Mr{k}{n}$ and $R \in \Mr{m}{k}$ such that $LCR = \id_k$, and then we will deal with the maximality of $k$ afterwards. 
For a matrix $A$ we use the notation $\vec{A}_i$ for the vector consisting of the elements of the $i$th row of $A$. 

Let first $C \in \Mr{n}{m}$ have $k$ orthogonal rows, i.e., there exists indices $\{p_1, \ldots, p_k\} \subset \{1, \ldots, n\}$ such that $\vec{C}_{p_i} \cdot \vec{C}_{p_j} = 0$ for all $i \neq j$, $i,j \in \{1, \ldots, k\}$. For each $i \in \{1, \ldots, k\}$ we define $Q_i = \{ q \in \{1, \ldots, m\} \, | \, C_{p_i q} \neq 0 \}$. We see that if $q \in Q_i$, then for all $j \in \{1, \ldots,k\}$, $j\neq i$, we have that
\begin{align*}
0 = \vec{C}_{p_i} \cdot \vec{C}_{p_j} = \sum_{q'=1}^m C_{p_iq'} C_{p_jq'} = C_{p_iq} C_{p_jq} +  \sum_{q' \neq q} C_{p_iq'} C_{p_jq'}
\end{align*}
so that $C_{p_jq}=0$ and therefore $q \notin Q_j$. Thus, $Q_i \cap Q_j = \emptyset$ for all $i \neq j$. On the other hand we might have that $C_{p_i q} = 0$ for all $i$ so that $q \notin Q_i$ for any $i$, and thus it might happen that $\cup_{i=1}^k Q_i \neq \{1, \ldots, m\}$. 

Thus, we can define a matrix $R \in \M{m}{k}$ by 
\begin{align*}
R_{qj} = 
\begin{cases}
1, & \mathrm{if} \ q \in Q_j, \\
\frac{1}{k}, & \mathrm{if} \ q \notin \cup_{i=1}^k Q_i, \\
0 & \mathrm{otherwise}.
\end{cases}
\end{align*}
and see that based on the previous properties we have that $R \in \Mr{m}{k}$. Also we define another stochastic matrix $L \in \Mr{k}{n}$ by setting $L_{ip} = \delta_{p_i p}$ for all $i \in \{1, \ldots, k\}$ and $p \in \{1, \ldots, n\}$. Now we see that
\begin{align*}
(LCR)_{ij} & = \sum_{p=1}^n \sum_{q=1}^m L_{ip}C_{pq}R_{qj} = \sum_{p=1}^n \sum_{q=1}^m \delta_{p_i p} C_{pq} R_{qj}\\
& = \sum_{q=1}^m C_{p_i q} R_{qj} = \sum_{q \in Q_i} C_{p_i q} R_{qj} = \delta_{ij} 
\end{align*}
for all $i,j \in \{1, \ldots, k\}$. Hence $LCR = \id_k$.

Let then $LCR= \id_k$ for some stochastic matrices $L \in \Mr{k}{n}$ and $R \in \Mr{m}{k}$. Since $\id_k$ has orthogonal rows, we have for $i, j \in \{1, \ldots,k\}$, $i\neq j$, that 
\begin{align}
\nonumber 0 &= \overrightarrow{ (\id_k)}_i \cdot  \overrightarrow{ (\id_k)}_j = \overrightarrow{ LCR}_i \cdot  \overrightarrow{ LCR}_j = \sum_{x=1}^k (LCR)_{ix} (LCR)_{jx}  \\
\nonumber &= \sum_{x=1}^k \sum_{p,p'=1}^n \sum_{q,q'=1}^m L_{ip}C_{pq}R_{qx} L_{jp'}C_{p'q'}R_{q'x}  \\
\nonumber &= \sum_{q,q'=1}^m \left[ \left(\sum_{p=1}^n L_{ip}C_{pq} \right) \left(\sum_{p'=1}^n L_{jp'}C_{p'q'} \right) \right] \left( \sum_{x=1}^k R_{qx} R_{q'x} \right) \\
&= \sum_{q,q'=1}^m \left[(LC)_{iq}  (LC)_{jq'} \right] \left( \vec{R}_q \cdot \vec{R}_{q'} \right). \label{eq:id_i-cdot-id_j}
\end{align}
Now if in the sum \eqref{eq:id_i-cdot-id_j} we have $q'=q$, then clearly $\vec{R}_q \cdot \vec{R}_q \neq 0$, since $R$ is a row-stochastic matrix. Thus, since all of the terms in the sum are nonnegative, in order for it to result zero, we must have that $(LC)_{iq}  (LC)_{jq}= 0$ for all $q \in \{1, \ldots, m\}$ and $i \neq j$. Hence,
\begin{align}
\nonumber 0 &= \overrightarrow{LC}_i \cdot \overrightarrow{LC}_j = \sum_{q = 1}^m (LC)_{iq} (LC)_{jq} = \sum_{q=1}^m \sum_{p,p'=1}^n L_{ip}C_{pq} L_{jp'}C_{p'q} \\
&= \sum_{p,p'=1}^n \left( \sum_{q=1}^m C_{pq}C_{p'q} \right) L_{ip} L_{jp'} = \sum_{p,p'=1}^n \left( \vec{C}_p \cdot \vec{C}_{p'} \right) L_{ip} L_{jp'} \, . \label{eq:LC_i-cdot-LC_j}
\end{align}

Clearly, since $L$ is a row-stochastic matrix, for all $i\neq j$  there exists $p_i, p_j \in \{1, \ldots, n\}$ such that $L_{ip_i} L_{j p_j} \neq 0$. Then we again see that in order for the  sum \eqref{eq:LC_i-cdot-LC_j} (that again consists of nonnegative terms) to result zero, we must have  $\vec{C}_{p_i} \cdot \vec{C}_{p_j} = 0$, i.e., the $p_i$th and $p_j$th rows of $C$ must be orthogonal. Thus, there exist at least $k$ indices $\{p_1, \ldots, p_k\} \subset \{1, \ldots, n\}$ such that $\vec{C}_{p_i} \cdot \vec{C}_{p_j}=0$ for all $i \neq j$. In order to complete this part of the proof we need to show that $p_i \neq p_j$ for all $i\neq j$. 

For that, let us consider the sum \eqref{eq:LC_i-cdot-LC_j} once more. For $p'=p$ in the sum, we obviously have that $\vec{C}_{p} \cdot \vec{C}_{p'} =\vec{C}_{p} \cdot \vec{C}_{p} \neq 0$, so that we must have $L_{ip} L_{jp}=0$ for all $p \in \{1, \ldots, n\}$ and $i \neq j$. Thus, if we would have $p_i=p_j$ for some $i \neq j$, then $L_{ip_i}L_{jp_j} = 0$ which contradicts the defining property $L_{ip_i}L_{jp_j} \neq 0$ of how $p_i$ and $p_j$ were found. We conclude that $p_i \neq p_j$ for all $i \neq j$ so that the rows $\vec{C}_{p_1}, \ldots, \vec{C}_{p_k}$ are orthogonal.

Let now $\iota(C)=k$. Thus, by the previous part of the proof $C$ has (at least) $k$ orthogonal rows. Suppose that $C$ has $s>k$ orthogonal rows. Then by the first part of the proof we would have that $L'CR' = \id_s$ for some stochastic matrices $L'$ and $R'$ so that $\iota(C) \geq s > k = \iota(C)$, which is a contradiction.

On the other hand, let the maximal number of orthogonal rows of $C$ be $k$ and suppose that $t:=\iota(C) > k$.  Thus, there exist stochastic $L$ and $R$ such that $LCR=\id_t$ and again by the first part of the proof we would have that $C$ has at least $t$ orthogonal rows which contradicts the maximality of $k$. This concludes the proof.
\end{proof}

\begin{example}
The proof of the previous proposition gives us a technique to find $L, R \in \Mrall$ for $C \in \Mrall$ such that $LCR = \id_{\iota(C)}$. 
However, we note that the $L$ and $R$ are however not unique; let us take 
\begin{equation}
C= \begin{bmatrix}[1.2]
\frac{1}{2} & \frac{1}{2} & 0 & 0 & 0 \\
\frac{1}{3} & 0 & \frac{1}{3} & \frac{1}{3} & 0 \\
0 & \frac{1}{2} & 0 & 0 & \frac{1}{2} \\
0 & 0 & 0 & \frac{1}{2} & \frac{1}{2}  \\
\end{bmatrix}.
\end{equation}
We see that the maximal number of orthogonal rows is 2 and this is the case for rows $(p_1,p_2) = (1,4)$ and $(p'_1,p'_2) = (2,3)$. Let first $(p_1,p_2)=(1,4)$. By following the proof, we arrive at the following majorization decomposition for $\id_2$:
\begin{equation*}
\begin{bmatrix}
1 & 0 & 0 & 0 \\
0 & 0 & 0 & 1 \\
\end{bmatrix}
\begin{bmatrix}[1.2]
\frac{1}{2} & \frac{1}{2} & 0 & 0 & 0 \\
\frac{1}{3} & 0 & \frac{1}{3} & \frac{1}{3} & 0 \\
0 & \frac{1}{2} & 0 & 0 & \frac{1}{2} \\
0 & 0 & 0 & \frac{1}{2} & \frac{1}{2}  \\
\end{bmatrix}
\begin{bmatrix}[1.05]
1 & 0 \\
1 & 0 \\
\frac{1}{2} & \frac{1}{2} \\
0 & 1 \\
0 & 1 \\
\end{bmatrix}
 = \begin{bmatrix}
 1 & 0 \\ 
 0 & 1 \\
 \end{bmatrix}
\end{equation*}
Similarly, if we take $(p'_1,p'_2) =(2,3)$, we get the following decomposition:
\begin{equation*}
\begin{bmatrix}
0 & 1 & 0 & 0 \\
0 & 0 & 1 & 0 \\
\end{bmatrix}
\begin{bmatrix}
\frac{1}{2} & \frac{1}{2} & 0 & 0 & 0 \\
\frac{1}{3} & 0 & \frac{1}{3} & \frac{1}{3} & 0 \\
0 & \frac{1}{2} & 0 & 0 & \frac{1}{2} \\
0 & 0 & 0 & \frac{1}{2} & \frac{1}{2}  \\
\end{bmatrix}
\begin{bmatrix}
1 & 0 \\
0 & 1 \\
1 & 0 \\
1 & 0 \\
0 & 1 \\
\end{bmatrix}
 = \begin{bmatrix}
 1 & 0 \\ 
 0 & 1 \\
 \end{bmatrix}.
\end{equation*}
\end{example}

\section{Example: uniform (anti)distinguishability}\label{sec:example}

As a demonstration of the ultraweak order and the introduced tools, we analyze tha family of communication matrices $D_{n,\epsilon}$, where $n$ is fixed and $0\leq\epsilon\leq 1$.
As discussed in Section \ref{sec:some}, a communication matrix $D_{n,\epsilon}$ is related either to a distinguishability task or to an antidistinguishability task, the boundary value being $\epsilon=1-1/n$.
The special cases are $D_{n,0}=\id_n$ (error-free distinguishability), $D_{n,1} \simeq A_n$ (error-free antidistinguishability) and $D_{n,1-1/n}=V_n$ (pure noise).
It is clear that $V_n \uleq D_{n,\epsilon} \uleq \id_n$ for any $\epsilon$.

\begin{figure}
\begin{center}
\includegraphics[scale=1]{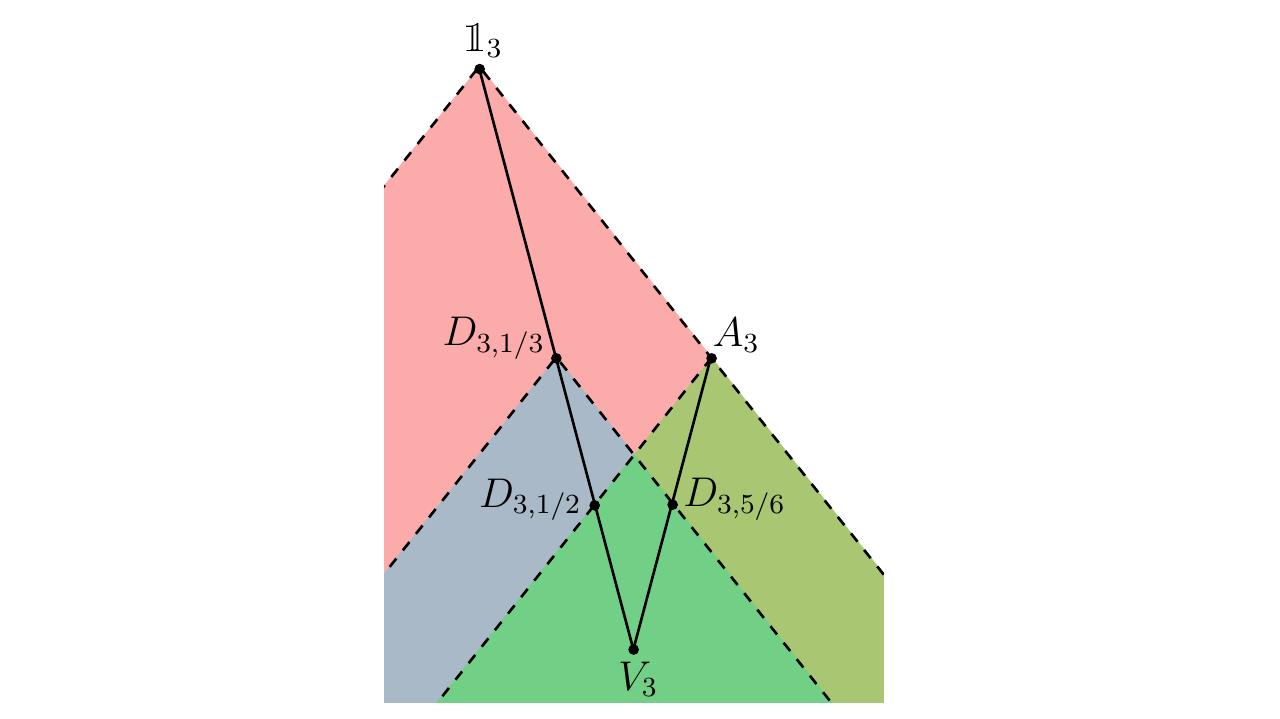}
\caption{\label{fig-2} The ultraweak order of matrices $D_{n,\epsilon}$ for $n=3$. The line segment is formed of the matrices $D_{3,\epsilon}$, and a matrix is majorized by another matrix if the first matrix is contained in the downward cone placed at the position of the second matrix.}
\end{center}
\end{figure}

In the following we determine the ultraweak order completely for fixed $n$.
The results are:
\begin{itemize}
\item  If $\epsilon\in[0,1-1/n]$ (i.e. distinguishability), then $D_{n,\epsilon} \ugeq D_{n, \mu}$ if and only if $\mu \in \left[ \epsilon, 1- \frac{\epsilon}{n-1} \right]$.
\item If $\epsilon\in[1-1/n,1]$ (i.e. antidistinguishability), then $D_{n,\epsilon} \ugeq D_{n, \mu}$ if and only if $\mu \in \left[1- \frac{\epsilon}{n-1}, \epsilon\right]$. 
\end{itemize}
Interestingly, the error-free antidistinguishability matrix $D_{n,1}$ is majorized only by $\id_n$. The results for $n=3$ are illustrated in Fig. \ref{fig-2}. The black line segments are formed by the matrices $D_{3,\epsilon}$ for all $\epsilon \in [0,1]$ such that $\id_3$ (with $\epsilon=0$) is on top and as the line segment goes down to $V_3$, the corresponding value of $\epsilon$ goes up to $2/3$, and as the line segment climbs back up from $V_3$ to $A_3$, the values of $\epsilon$ grow from $2/3$ to $1$. A matrix $D_{3, \mu}$ can be majorized by a matrix $D_{3,\epsilon}$ if and only if it is contained in the downward cone placed at $D_{3,\epsilon}$. We see for example that every $D_{3,\epsilon}$ can be majorized from the identity $\id_3$, and $D_{3,\epsilon}$ can be majorized by $A_3$ if and only if $\epsilon\in [1/2, 1]$. The space outside the line segments reperesent other communication matrices so that for example everything inside the red downward cone starting from $\id_3$ is exactly the set of all communication matrices obtained from a $3$-dimensional classical theory. The cones are cut for practical purposes.

We start by showing that the bounds for $\mu$ for a given $\epsilon$ are necessary for the majorization $D_{n,\epsilon} \ugeq D_{n, \mu}$. For this we will use the previously defined monotones $\lambda_{max}$ and $\lambda_{min}$. 

First, let $\epsilon\in[0,1-1/n]$ and $D_{n, \epsilon} \ugeq D_{n, \mu}$. Suppose that $ 0 \leq \mu < \epsilon$. Then by using $\lambda_{max}$, we see that
\begin{equation*}
\lambda_{max}(D_{n, \epsilon}) = n(1- \epsilon) \geq n(1-\mu) = \lambda_{max}(D_{n, \mu}) \, ,
\end{equation*}
which implies that $\mu \geq \epsilon$ thus leading to a contradiction. Similarly, if we suppose that $\mu > 1- \frac{\epsilon}{n-1}$, by using $\lambda_{min}$, we see that
\begin{equation*}
\lambda_{min}(D_{n, \epsilon}) = -\frac{n}{n-1}\epsilon \geq -n(1-\mu) = \lambda_{min}(D_{n, \mu}) \, ,
\end{equation*}
which then implies that $\mu \leq 1- \frac{\epsilon}{n-1}$ thus again leading to a contradiction. We conclude that if $\epsilon \in [0,1-1/n]$, then $\mu \in \left[ \epsilon, 1- \frac{\epsilon}{n-1} \right]$.

Second, let $\epsilon \in [1-1/n, 1]$ and $D_{n, \epsilon} \ugeq D_{n, \mu}$. 
By a similar fashion, if we suppose that $0 \leq \mu < 1-\frac{\epsilon}{n-1}$, by using $\lambda_{max}$ we see that 
\begin{equation*}
\lambda_{max}(D_{n, \epsilon}) = \frac{n}{n-1}\epsilon \geq n(1-\mu) = \lambda_{max}(D_{n, \mu}) \, ,
\end{equation*}
which implies that $\mu \geq 1- \frac{\epsilon}{n-1}$ which is again a contradiction. On the other hand if we suppose that $\mu > \epsilon$, then
\begin{equation*}
\lambda_{min}(D_{n, \epsilon}) = -n(1-\epsilon) \geq -n(1-\mu) = \lambda_{min}(D_{n, \mu}) \, ,
\end{equation*}
and thus $\mu \leq \epsilon$. Hence, we conclude that if $\epsilon\in[1-1/n,1]$, then $\mu \in \left[1- \frac{\epsilon}{n-1}, \epsilon\right]$.

To see that we can actually realize all the stated majorizations,  we define a row-stochastic matrix $L_\lambda =\lambda \id_n + (1-\lambda) D_{n,1}$ for any $\lambda \in [0,1]$.
Then
\begin{align*}
L_\lambda D_{n,\epsilon} & = \lambda D_{n,\epsilon} + (1-\lambda) D_{n,1-\frac{\epsilon}{n-1}} = D_{n, 1-\lambda (1- \epsilon) -(1-\lambda) \frac{\epsilon}{n-1}} \, .
\end{align*}
Now if  $\epsilon\in[0,1-1/n]$, then $\epsilon \leq 1- \frac{\epsilon}{n-1}$ so that  $1-\lambda (1- \epsilon) -(1-\lambda) \frac{\epsilon}{n-1} \in \left[ \epsilon, 1- \frac{\epsilon}{n-1} \right]$ for all $\lambda \in [0,1]$ and similarly if $\epsilon\in[1-1/n,1]$, then $\epsilon \geq 1- \frac{\epsilon}{n-1}$ so that  $1-\lambda (1- \epsilon) -(1-\lambda) \frac{\epsilon}{n-1} \in \left[ 1- \frac{\epsilon}{n-1}, \epsilon \right]$ for all $\lambda \in [0,1]$.

Let us then consider a state space $\state$ with operational dimension $d_{op}$ and linear dimension $d_{lin}$.
We denote 
\begin{equation*}
I_{n}(\state) = \{ \epsilon \in [0,1] \,|\, D_{n,\epsilon}\in\Call{\state} \} \, .
\end{equation*}
The previous characterization of the ultraweak majorization of the $D_{n,\epsilon}$ matrices implies that $I_{n}(\state)$ is an interval. To further limit the interval $I_n(\state)$, we see that if $\epsilon \in I_n(\state)$ so that $D_{n,\epsilon} \in \Call{\state}$, then $n(1-\epsilon) = \lmax{D_{n,\epsilon}} \leq \lambda_{max,n}(\state)$ which gives us $\epsilon \geq 1- \lambda_{max,n}(\state)/n$. This is precisely the generalization of the basic decoding theorem that was discussed in Sec. \ref{subsec:minmax}. Thus, we must always have that 
\begin{equation}\label{eq:I_n-lmax}
I_n(\state) \subseteq \left[ 1- \frac{\lambda_{max,n}(\state)}{n}, 1 \right]
\end{equation}

Firstly, for $n\leq d_{op}$ we have $D_{n,\epsilon}\in\Call{\state}$ for all $\epsilon\in[0,1]$ as in these cases $D_{n,\epsilon}\uleq\id_n\in\Call{\state}$. In this case also $\lambda_{max,n}(\state) = n$ so that Eq. \eqref{eq:I_n-lmax} gives only a trivial lower bound.
Secondly, for $n\geq d_{lin}+1$ we have $D_{n,\epsilon}\in\Call{\state}$ only when $\epsilon= 1-1/n$ since $\rank{D_{n,\epsilon}}=n$ for $\epsilon\neq1-1/n$ and Prop. \ref{prop:maximal-rank} bounds the rank of matrices in $\Call{\state}$.
The nontrivial cases are hence limited to $d_{op}<n\leq d_{lin}$ and the intervals $I_{n}(\state)$ then depend on the details of $\state$.

\begin{example}(\emph{Uniform (anti)distinguishability of qudit states})
As was mentioned above, the nontrivial cases of determining $I_n(\qudit)$ are limited to $d<n\leq d^2$ since $I_n(\qudit) = [0,1]$ for $n\leq d$ and $I_n(\qudit) = \{1 - \frac 1n \}$ for $n > d^2$. For $d<n\leq d^2$, we get from the basic quantum decoding theorem  (see Example \ref{ex:lmax-quantum}) that $I_n(\qudit) \subseteq \left[ 1- \frac dn, 1 \right]$. Furthermore, for $n<d^2$, we see from Example  \ref{ex:G42-psdrank} that $\psdrank{A_n} \leq d$ so that $1 \in I_n(\qudit)$ and thus the upper bound in the previous inclusion is attained in this case. By Example \ref{ex:G42-psdrank} this also holds for all $n=d^2$ when $d\leq 151$ or $d$ is odd. The lower bound is also attained for $n=d^2$ when $d \leq 151$ as it is then given by a SIC-POVM.

In the special case of qubits, one can show that for $n=3$ and $n=4$ we can choose pure qubit states with equal trace distance and perform their minimum error discrimination that gives $D_{n, 1 - \frac 2n}$. Thus, for qubits we conclude that  $I_{n}(\qubit) = \left[ 1- \frac 2n, 1 \right]$ for the nontrivial cases when $n\in \{3,4\}$.
\end{example}

\section{Comparisons of the monotones}\label{sec:comparison}

In this section we examine the monotones further. 
In Subsec. \ref{sec:mono1} we show that all of the introduced monotones are useful, meaning that each of them detect some inequivalences that none other is detecting.  
In Subsec. \ref{sec:mono2} we show that our collection of six monotones is not enough to characterize the ultraweak matrix majorization. 
This may not be a surprise, but we believe that analyzing inequivalent matrices where all the monotones coincide may give insight in introducing other monotones. 
Finally, in Subsec. \ref{sec:mono3} we summarize some inequalities for the numerical values of the monotones.

\subsection{Incomparability of the monotones}\label{sec:mono1}

Let $f_1$ and $f_2$ be two ultraweak monotones.
We say that $f_1$ is \emph{finer} than $f_2$ if the following holds for all $C,D\in\Mrall$:
\begin{align*}
f_1(C)=f_1(D) \quad \Rightarrow \quad f_2(C) = f_2(D) \, .
\end{align*}
This relation is a preorder in the set of ultraweak monotones. 
Obviously, if we have two monotones and one of them is finer than the other, then we can ignore the second one and use only the finer one.

In the following we demonstrate that all the introduced ultraweak monotones are incomparable in the sense that none is finer than any other.
We will present a set of matrices where, for a given monotone, there is a pair of communication matrices such that their ultraweak inequivalence is only detected by the given monotone and all the other monotones take the same value for both matrices. As we give such an example for each monotone, together they show that each of them can be used to detect an inequivalence that the others cannot detect.
This means that they are also incomparable in the sense described earlier. 

Let us consider the following matrices:
\begin{align*}
K_+ &= \frac{1}{2} \begin{bmatrix}
1 & 1 & 0 & 0 \\
1 & 0 & 1 & 0 \\
1 & 0 & 0 & 1 \\
0 & 1 & 0 & 1 \\
0 & 0 & 1 & 1 \\
\end{bmatrix}, \quad 
K= \frac{1}{2} \begin{bmatrix}
1 & 1 & 0 & 0 \\
1 & 0 & 1 & 0 \\
0 & 1 & 0 & 1 \\
0 & 0 & 1 & 1 \\
\end{bmatrix},  \quad
K_- = \frac{1}{2} \begin{bmatrix}
1 & 1 & 0 & 0 \\
1 & 0 & 1 & 0 \\
0 & 1 & 0 & 1 \\
\end{bmatrix},  \\
D_{3,1/3} &=  \begin{bmatrix}[1.3]
\frac{2}{3} & \frac{1}{6} & \frac{1}{6} \\
\frac{1}{6} & \frac{2}{3} & \frac{1}{6} \\
\frac{1}{6} & \frac{1}{6} & \frac{2}{3} \\
\end{bmatrix}, \quad  \quad \quad 
A= \begin{bmatrix}[1.3]
1 & 0 & 0 \\
\frac{1}{2} & \frac{1}{2} & 0 \\
\frac{1}{2} & 0 & \frac{1}{2} \\
\end{bmatrix}, \quad  \quad \quad 
B= \begin{bmatrix}[1.3]
\frac{2}{3} & \frac{1}{3} & 0 \\
0 & \frac{2}{3} & \frac{1}{3} \\
\frac{1}{3} & 0 & \frac{2}{3} \\
\end{bmatrix}, \\
C &= \begin{bmatrix}[1.3]
1 & 0 & 0 \\
0& \frac{1}{2} & \frac{1}{2}  \\
\frac{1}{2} & 0 & \frac{1}{2} \\
\end{bmatrix},\quad  \quad \quad 
D= \begin{bmatrix}[1.3]
1 & 0 & 0 \\
0& \frac{1}{2} & \frac{1}{2}  \\
0 & 0 & 1 \\
\end{bmatrix}.
\end{align*}
The values of the monotones for the matrices above are listed in Table \ref{table:monotones}. 

\begin{table}[h]
\begin{tabular}{|l|c|c|c|c|c|c|c|c|c|} \hline
					& $K_+$ & $K$ & $K_-$ & $D_{3,1/3}$ & $A$ & $B$ & $C$ & $D$  \\ \hline
$\rankw$		& \blue{4} & \blue{3} & 3 & 3    & 3    & 3 & 3 & 3   \\ \hline
$\nrankw$		& 4 & \blue{4} & \blue{3} & 3    & 3    & 3 & 3 & 3   \\ \hline
$\psdrankw$	& 3 & 3 & 3 & \blue{2}    & \blue{3}    & 3 & 3 & 3   \\ \hline
$\lminw$		& 0 & 0 & 0 & -1/2 & \blue{-1/2} & \blue{0} & 0 & 0   \\ \hline
$\iota$ 		& 2 & 2 & 2 & 1    & 1    & \blue{1} & \blue{2} & 2   \\ \hline
$\lmaxx$		& 2 & 2 & 2 & 2    & 2    & 2 & \blue{2} & \blue{5/2} \\ \hline
\end{tabular} \vspace*{0.3cm}
\caption{ \label{table:monotones} The values of the monotones for the given matrices. For each monotone one can find a pair of matrices for which only that monotone can be used to detect the inequivalence of the pair.}
\end{table}

We see that for each monotone, there is a pair of matrices such that only the monotone in question detects the ultraweak inequivalence of the matrices. The monotones and the pairs of matrices are the following: rank for $K_+$ and $K$, nonnegative rank for $K$ and $K_-$, positive semidefinite rank for $D_{3,1/3}$ and $A$, $\lambda_{min}$ for $A$ and $B$, $\iota$ for $B$ and $C$, and lastly $\lambda_{max}$ for $C$ and $D$. Together they show that the monotones presented here are incomparable in the sense described earlier.

Next we explain how the values of the monotones in Table \ref{table:monotones} were obtained. The values for $\mathrm{rank}$, $\lambda_{max}$ and $\lambda_{min}$ can easily be obtained from their definitions and require no further analysis. Also the characterization in Prop. \ref{prop:iota} for $\iota$ gives an easy way to calculate the value of $\iota$. Thus, what remain are the values of $\mathrm{rank}_+$ and $\mathrm{rank}_{psd}$ for the given matrices.

First we note that if for a matrix $M \in \Mr{n}{m}$ we have that $\rank{M} = \min\{n,m\}$, then by Eq. \eqref{eq:rank-ineq} we must have that $\nrank{M} = \min\{n,m\}$. Thus, by using this we get the values of $\mathrm{rank}_+$ for every other matrix other than $K$. 
The nonnegative rank of $K$ was found in Ref. \cite{CoRo93} to be $4$.

For the positive semidefinite rank, we start with the matrices $K_+$, $K$ and $K_-$. Since $K_+$ can be obtained from $G_{4,2}$ by removing one row and similarly $K$ from $K_+$ and $K_-$ from $K$, we have the following chain of majorizations:
\begin{equation}
K_- \uleq K \uleq K_+ \uleq G_{4,2} \, .
\end{equation}
Since we know from Example \ref{ex:G42-psdrank} that $\psdrank{G_{4,2}} =3$, we thus get that the positive semidefinite rank of $K_+$, $K$ and $K_-$ must all be less than three. However, all three matrices satisfy the premises of Prop. \ref{prop:psdrank-kernel} so that together with the previous fact we must have that the positive semidefinite rank of all three matrices is exactly three. 

For the remaining matrices, we note that from Prop. \ref{prop:psdrank-kernel} and Eq. \eqref{eq:psd-rank-ineq} it follows that $\psdrank{A} = \psdrank{C} = \psdrank{D} = 3$. 
From Example \ref{ex:nonconvex} we conclude that $\psdrank{D_{3,1/3}} =2$ and $\psdrank{B} \geq 3$. 
However, from Eq. \eqref{eq:psd-rank-ineq} we see that $\psdrank{B} = 3$. 
This concludes the filling of the table.

\subsection{Incompleteness of the set of monotones}\label{sec:mono2}

From Table \ref{table:monotones} we see that all the monotones coincide for the matrices $C$ and $K_-$. 
It is straightforward to see that $C \uleq K_-$. We will continue to show that even though the values of the monotones coincide for $C$ and $K_-$, they are not ultraweakly equivalent so that our set of monotones is not a complete set. This means that our monotones do not always detect the inequivalence of some communication matrices.

\begin{proposition}
$K_- \npreceq C$.
\end{proposition}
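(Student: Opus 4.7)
The plan is to suppose for contradiction that $K_- = LCR$ for some row-stochastic $L \in \Mr{3}{3}$ and $R \in \Mr{3}{4}$, and to exploit the sparse zero patterns of $C$ and $K_-$ to eliminate every possibility. Setting $M = LC$, the rows of $M$ are convex combinations of the three rows $(1,0,0)$, $(0,1/2,1/2)$, $(1/2,0,1/2)$ of $C$. A direct case analysis on the triple $(L_{i1}, L_{i2}, L_{i3})$ shows that the support $S_i := \{k : M_{ik} > 0\}$ can only belong to the list $\{\{1\}, \{2,3\}, \{1,3\}, \{1,2,3\}\}$: for instance, $M_{i2}>0$ forces $L_{i2}>0$, which then forces $M_{i3}>0$ as well, so $2 \in S_i$ implies $3 \in S_i$.

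Next I translate the zeros of $K_-$ into constraints on $R$. From $(K_-)_{ij} = \sum_k M_{ik} R_{kj}$ and nonnegativity, each zero entry forces $R_{kj}=0$ for every $k \in S_i$. Reading off the zero positions of $K_-$ gives: for $k \in S_1$, $R_{k3}=R_{k4}=0$; for $k \in S_2$, $R_{k2}=R_{k4}=0$; and for $k \in S_3$, $R_{k1}=R_{k3}=0$. Any $k \in S_2 \cap S_3$ would then zero out all four entries of row $k$ of $R$, contradicting row-stochasticity, so $S_2 \cap S_3 = \emptyset$. Combining this with the list of admissible supports above and the fact that each $S_i$ is nonempty, the only possibilities are $(S_2, S_3) = (\{1\}, \{2,3\})$ or $(\{2,3\}, \{1\})$, and these in fact fix $M_2$ and $M_3$ as two particular rows of $C$.

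In each case, the remaining constraints from the known values of $(K_-)_{2j}$ and $(K_-)_{3j}$ pin down the relevant rows of $R$, after which row 1 of $K_-$ yields a contradiction. In the case $(S_2, S_3)=(\{1\}, \{2,3\})$, the forced value of $R$'s first row and the condition $(K_-)_{11}=1/2$ compel $M_{11}=1$, hence $M_1=(1,0,0)$, which then gives $(K_-)_{12}=0 \neq 1/2$. In the case $(S_2, S_3)=(\{2,3\}, \{1\})$, the previously forced $R_{12}=1/2$ rules out $1 \in S_1$, so the admissible list collapses to $S_1 = \{2,3\}$ and $M_1=(0,1/2,1/2)$; but the $S_2$-constraints already impose $R_{23}=R_{33}=0$ and hence $R_{21}=R_{31}=1$, producing $(K_-)_{11}=1 \neq 1/2$.

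The main obstacle is the combinatorial bookkeeping: making the enumeration of possible $S_i$'s genuinely exhaustive, and then checking that the forced entries of $R$ leave no slack in either case. Once the list of admissible supports is in hand, however, the proof reduces to two small and essentially mechanical checks.
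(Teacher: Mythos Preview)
Your argument is correct and takes a cleaner, more structural route than the paper's. The paper performs a direct entry-by-entry case analysis: it writes out $(LCR)_{ij}$ explicitly, focuses on the third column of $R$, and splits into the three cases $R_{13}\neq 0$, $R_{33}\neq 0$, $R_{23}\neq 0$, eliminating each by chasing implications through a handful of matrix entries. You instead pass to $M=LC$, classify the possible supports $S_i$ of its rows (only $\{1\},\{2,3\},\{1,3\},\{1,2,3\}$ can occur, since $2\in S_i\Rightarrow 3\in S_i$), and use the zero pattern of $K_-$ to force $S_2\cap S_3=\emptyset$, which collapses the problem to two symmetric cases. This support-based bookkeeping makes the combinatorics more transparent and would generalize more readily to similar questions; the paper's approach is more elementary but also more ad hoc.

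Two small slips in your Case~2 should be fixed. First, it is $R_{14}=\tfrac12$ (not $R_{12}$) that rules out $1\in S_1$: the $S_1$-constraints zero columns $3$ and $4$ of $R$, so $1\in S_1$ would force $R_{14}=0$, while $R_{12}=\tfrac12$ alone is compatible with $1\in S_1$. Second, once $S_1=\{2,3\}$, it is the $S_1$-constraints (columns $3,4$) that impose $R_{23}=R_{33}=0$, not the $S_2$-constraints (which zero columns $2,4$). With these two subscripts corrected, your argument goes through exactly as written.
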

\begin{proof}
Suppose that $K_- \uleq C$ so that there exists $L \in \Mr{3}{3}$ and $R \in  \Mr{3}{4}$ such that $LCR = K_-$. By calculating $LCR$ explicitly we find that \begin{align}\label{eq:LCR-elements}
(LCR)_{ij} = \dfrac{1}{2} \left[ (2 L_{i1} + L_{i3}) R_{1j} + L_{i2} R_{2j} + (L_{i2} +L_{i3})R_{3j} \right]
\end{align}
and this must coincide with $(K_-)_{ij}$ for all $i\in \{1,2,3\}$ and $j \in \{1,2,3,4\}$. First we note that since $K_-$ does not have any zero columns, we must have that all the columns of $R$ must also be nonzero so that each column of $R$ must have at least one nonzero element. 

Let us focus on the third column of $R$. From the above consideration we know that at least one of the elements $R_{13}$, $R_{23}$ or $R_{33}$ must be nonzero in order to $LCR =K_-$ to hold. Suppose first that $R_{13} \neq 0$. By looking at the third column of $K_-$, we find that 
$(K_-)_{13} = (K_-)_{33}=0$ so that from Eq. \eqref{eq:LCR-elements} it follows that $L_{i1} = L_{i3}=0$ and $L_{i2}=1$ for $i \in \{1,3\}$. Then by plugging these into the expressions of the elements $(LCR)_{11}$ and $(LCR)_{31}$ in Eq. \eqref{eq:LCR-elements}, we find that 
\begin{align*}
0 = (K_-)_{31} = (LCR)_{31} = \dfrac{1}{2} (R_{21} + R_{31}) = (LCR)_{11} = (K_-)_{11} = \frac{1}{2} \, ,
\end{align*}
which is a contradiction. 
Thus, $R_{13}=0$. 

Suppose next that $R_{33} \neq 0$ so that again from the third column of $K_-$ we can deduce with Eq. \eqref{eq:LCR-elements} that $L_{i2} = L_{i3} = 0$ and $L_{i1}=1$ for $i \in \{1,3\}$. By plugging these into the expressions of the elements $(LCR)_{11}$ and $(LCR)_{31}$ we find that 
\begin{align*}
0 = (K_-)_{31} = (LCR)_{31} = R_{11} = (LCR)_{11} = (K_-)_{11} = \frac{1}{2} \, ,
\end{align*}
which is again a contradiction. 
Thus, also $R_{33}=0$.

Suppose lastly that $R_{23} \neq 0$. Once again we use the third column of $K_-$ to find that in this case we must have $L_{12} = L_{32} =0$. If we look at elements $(LCR)_{11}$ and $(LCR)_{31}$ by noting that 
\begin{equation*}
2 L_{i1} + L_{i3} = 1+ L_{i1}
\end{equation*}
 for $i\in \{1,3\}$ which follows from the stochasticity of $L$, we find that 
\begin{align*}
\frac{1}{2} [(1+ L_{11})R_{11} + L_{13} R_{31}] = (LCR)_{11} = (K_-)_{11} = \frac{1}{2} \, , \\
\frac{1}{2} [(1+ L_{31})R_{11} + L_{33} R_{31}] = (LCR)_{31} = (K_-)_{31} = 0 \, .
\end{align*}
From the latter expression we can deduce that $R_{11}=0$ so that for the former expression it means that $L_{13} = R_{31} = 1$. For the latter expression this means also that $L_{33}=0$ which together with $L_{32}=0$ and stochasticity of $K$ implies that $L_{31}=1$. From the stochasticity of $L$ it also follows that also $L_{11}=0$. By plugging these values into the expressions of the elements $(LCR)_{14}$ and $(LCR)_{34}$ we see that 
\begin{align*}
\frac{1}{2} [R_{14} + R_{34}] = (LCR)_{14} = (K_-)_{14} = 0, \\
R_{14} = (LCR)_{34} = (K_-)_{34} = \dfrac{1}{2},
\end{align*}
which together imply that $\frac{1}{2}= R_{14} = 0$, which is a contradiction. Hence, also $R_{23}=0$, but since none of the columns of $R$ could not be zero in order to satisfy $LCR=K_-$, we are forced to conclude that there does not exist such $L$ and $R$. 
\end{proof}

\subsection{Comparison of numerical values of the monotones}\label{sec:mono3}

Let us consider the monotones $\iota$,  $\lambda_{max}$, $\mathrm{rank}_{psd}$, $\mathrm{rank}$ and $\mathrm{rank}_+$. By minimizing and maximizing over all communication matrices of an operational theory with a state space $\state$, we have the following bounds for these monotones for all $C \in \Call{\state}$:
\begin{align*}
1 &\leq \iota(C) 	\leq d_{op}(\state) \\
1 &\leq \lmax{C} 	\leq \lmax{\state} \\
1 &\leq \psdrank{C} 	\leq d_q(\state) \\
1 &\leq \rank{C}		\leq d_{lin}(\state) \\
1 &\leq \nrank{C} 	\leq d_{cl}(\state)
\end{align*}

The numerical values of the monotones are related in the following way.

\begin{proposition}
For all $C \in \Call{\state}$, we have that
\begin{equation}\label{eq:comp1}
\iota(C) \leq \lmax{C} \leq \psdrank{C} \leq \nrank{C} 
\end{equation}
and
\begin{equation}\label{eq:comp2}
\sqrt{\rank{C}} \leq \psdrank{C}, \quad \rank{C} \leq \nrank{C} \, .
\end{equation}
\end{proposition}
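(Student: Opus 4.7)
The plan is to notice that most of the claimed inequalities are already recorded earlier in the paper, so the proof reduces to assembling them and adding one short new argument. Specifically, the inequality $\lmax{C}\leq\psdrank{C}$ is exactly Eq.~\eqref{eq:psdrank-lmax}, the inequality $\psdrank{C}\leq\nrank{C}$ is the second half of Eq.~\eqref{eq:psd-rank-ineq}, and $\sqrt{\rank{C}}\leq\psdrank{C}$ is its first half, while $\rank{C}\leq\nrank{C}$ is contained in Eq.~\eqref{eq:rank-ineq}. I would open the proof by citing these equations and declaring that it suffices to establish $\iota(C)\leq\lmax{C}$.

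For the remaining inequality $\iota(C)\leq\lmax{C}$, the natural approach is to exploit the defining property of $\iota$ together with the already established monotonicity of $\lmaxw$. Let $k=\iota(C)$; by definition of $\iota$ there are row-stochastic matrices $L,R$ with $L C R=\id_k$, which is to say $\id_k\uleq C$. Since $\lmaxw$ is an ultraweak monotone (proved in Section~\ref{subsec:minmax}), this gives $\lmax{\id_k}\leq\lmax{C}$. A one-line computation from the definition $\lmax{M}=\sum_j\max_i M_{ij}$ yields $\lmax{\id_k}=k$, because each column of $\id_k$ has exactly one entry equal to $1$ and all others equal to $0$. Combining these two facts gives $\iota(C)=k=\lmax{\id_k}\leq\lmax{C}$, as desired.

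There is no real obstacle here: the only substantive inequality is $\iota(C)\leq\lmax{C}$, and it follows immediately from monotonicity of $\lmaxw$ applied to the relation $\id_{\iota(C)}\uleq C$ together with the trivial evaluation $\lmax{\id_k}=k$. Everything else is a bookkeeping exercise of gathering previously proved bounds. The proof can therefore be written in just a few lines, and no case analysis or new technical tool is required.
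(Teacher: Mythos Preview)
Your proposal is correct and matches the paper's own proof essentially line for line: the paper also cites Eqs.~\eqref{eq:psdrank-lmax}, \eqref{eq:psd-rank-ineq}, and \eqref{eq:rank-ineq} for all but the first inequality, and handles $\iota(C)\leq\lmax{C}$ by exactly the same argument---applying the monotonicity of $\lmaxw$ to $\id_{\iota(C)}\uleq C$ and evaluating $\lmax{\id_{\iota(C)}}=\iota(C)$.
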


\begin{proof}
The first inequality in Eq. \eqref{eq:comp1} follows from noting that $\id_{\iota(C)} \uleq C$ so that by the monotonicity of $\lambda_{max}$ we must have $\lmax{C} \geq \lmax{\id_{\iota(C)}} = \iota(C)$. The second inequality in Eq. \eqref{eq:comp1} is the same as Eq. \eqref{eq:psdrank-lmax} which was shown in \cite{LeWeWo17} and can be obtained from Example \ref{ex:lmax-quantum} and Prop. \ref{prop:psdrank-quantum}. The final inequality in Eq. \eqref{eq:comp1} is just the second inequality in Eq. \eqref{eq:psd-rank-ineq}, and the inequalities in Eq. \eqref{eq:comp2} are collected from Eq. \eqref{eq:psd-rank-ineq} and Eq. \eqref{eq:rank-ineq}.
\end{proof}

By taking the supremum of each monotone in a given theory with a state space $\state$, we get the following corollary.
\begin{corollary}
For all state spaces $\state$ we have that
\begin{equation}\label{eq:cor-comp1}
d_{op}(\state) \leq \lmax{\state} \leq d_q(\state) \leq d_{cl}(\state)
\end{equation}
and
\begin{equation}\label{eq:cor-comp2}
d_{lin}(\state) \leq d_q(\state)^2, \quad d_{lin}(\state) \leq d_{cl}(\state).
\end{equation}
\end{corollary}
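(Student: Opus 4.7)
The plan is to derive the corollary by a direct sup-lift: for each pointwise monotone inequality in the preceding proposition, I take the supremum over $C\in\Call{\state}$ on both sides and identify each resulting quantity with the corresponding dimensional invariant of $\state$. The identifications $\lmax{\state}=\sup_{C}\lmax{C}$, $d_q(\state)=\sup_{C}\psdrank{C}$, $d_{cl}(\state)=\sup_{C}\nrank{C}$ and $d_{lin}(\state)=\sup_{C}\rank{C}$ have already been established earlier in the paper; the only identification that needs a brief argument is $d_{op}(\state)=\sup_{C}\iota(C)$.

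For the chain \eqref{eq:cor-comp1}, I start from the pointwise bounds $\iota(C)\leq\lmax{C}\leq\psdrank{C}\leq\nrank{C}$ valid for every $C\in\Call{\state}$, take the supremum termwise, and substitute the five identifications listed above to obtain $d_{op}(\state)\leq\lmax{\state}\leq d_q(\state)\leq d_{cl}(\state)$.

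For \eqref{eq:cor-comp2}, squaring $\sqrt{\rank{C}}\leq\psdrank{C}$ yields $\rank{C}\leq\psdrank{C}^2\leq d_q(\state)^2$ for every $C\in\Call{\state}$; taking the supremum on the left gives $d_{lin}(\state)\leq d_q(\state)^2$. Analogously, $\rank{C}\leq\nrank{C}\leq d_{cl}(\state)$ upgrades to $d_{lin}(\state)\leq d_{cl}(\state)$ after taking the supremum.

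The mildly subtle step is the identification $d_{op}(\state)=\sup_{C\in\Call{\state}}\iota(C)$ used in the first inequality of \eqref{eq:cor-comp1}. It follows from the definitions: $\iota(C)$ is the largest $k$ with $\id_k\uleq C$, so $\id_{\iota(C)}\in\Call{\state}$ by closure of $\Call{\state}$ under ultraweak majorization, giving $d_{op}(\state)\geq\sup_{C}\iota(C)$; conversely $\id_{d_{op}(\state)}\in\Call{\state}$ itself contributes $\iota(\id_{d_{op}(\state)})=d_{op}(\state)$ to the supremum. Beyond this, the corollary is simply the preceding proposition phrased globally rather than pointwise, so no genuine obstacle arises.
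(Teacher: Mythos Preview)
Your proposal is correct and follows exactly the paper's approach: the paper simply states that the corollary is obtained ``by taking the supremum of each monotone in a given theory,'' and you have spelled this out carefully, including the identification $d_{op}(\state)=\sup_{C}\iota(C)$ (which the paper already recorded earlier when introducing $\iota$). There is no substantive difference between the two.
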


The dimensions given by the monotones can be used to categorize theories into different classes where in each class each theory has the same set of dimensions. One can then compare communication in theories of different classes together as the dimensions are indicators to what kind of communication tasks might be possible to implement. To find differences in theories within the same class with the same dimensions one would have to consider the set of communication matrices themselves and see if some task can be implemented in some theory and not in others. Lastly, we demonstrate the dimensions in classical and quantum theory.

For the classical state space $\state^{cl}_d$ we have that all of the previous dimensions give the same value $d$, i.e.
\begin{align*}
d_{op}(\state^{cl}_d) = \lmax{\state^{cl}_d} = d_q(\state^{cl}_d) = d_{lin}(\state^{cl}_d) = d_{cl}(\state^{cl}_d) =d,
\end{align*}
and one can confirm that the property that all of the dimensions are equal is also necessary for a state space to be classical (actually it suffices only for $d_{op}$ and $d_{lin}$ to be equal). Thus, $d$-dimensional classical theory forms a class of its own. We note that classical theory saturates all the inequalities in Eq. \eqref{eq:cor-comp1} and the latter inequality in Eq. \eqref{eq:cor-comp2}. 

For quantum state space $\qudit$, we have that
\begin{align*}
d_{op}(\qudit) = \lmax{\qudit} = d_q(\qudit) = d, \quad d_{lin}(\qudit) = d^2 \leq d_{cl}(\qudit).
\end{align*}
Thus, for $d$-dimensional quantum theory, the first three inequalities in Eq. \eqref{eq:cor-comp1} are saturated as well as the first inequality of Eq. \eqref{eq:cor-comp2}. Our conjecture is that also the second inequality of Eq. \eqref{eq:cor-comp2} is saturated for $\qudit$ so that $d_{cl}(\qudit) = d^2$. However, to the best of our knowledge, $d_{cl}(\qudit)$ is not known, and it remains an open problem for future research. An open problem is also the question which saturations of inequalities, if any, in Eqs. \eqref{eq:cor-comp1} and \eqref{eq:cor-comp2} is enough to for a theory to be fixed to be $\qudit$, or if there are other theories with the same set of dimensions as $\qudit$.

\section{Discussion}
We have considered communication matrices that can be seen as communication tasks in prepare-and-measure scenarios in operational theories. We have introduced a preorder, ultraweak matrix majorization, in the set of communication matrices that captures the idea of when a communication task is harder than some other. We have further developed a set of monotones for the preorder, examined their properties and seen that they are can give information about the physical properties of a given theory. We demonstrate our results with concrete communication matrices that come from motivated physical communication tasks.

Our first monotone, the rank, is linked to the linear dimension of a state space, and it can be thus used as a necessary mathematical criterion to see if a communication task can be implemented with a theory with a specific dimension. Our second (third) monotone, the nonnegative (positive semidefinite) rank, gives the minimal dimension of a classical (quantum) system needed to implement a given communication task. It has been shown that computing the nonnegative rank and the (real) positive semidefinite rank of a matrix are NP-hard problems \cite{Vavasis09, Shitov16} in general. We do however provide a lower bound for the positive semidefinite rank of some special types of matrices. 

Our fourth and fifth monotes, the max and the min monotone, calculates the sum of the maximal and minimal elements of each column. The max monotone is then seen to link to some decoding and minimum error discrimination tasks. Our last monotone, the distinguishability monote, tells us how many (error-free) messages can be sent with a given communication matrix, and it is seen to connect to the operational dimension of a theory. We show that all the defined monotones are useful for detecting an inequivalence of different tasks but that they do not characterize the preorder completely. 
Finally, we consider and compare certain type of dimensions given by the monotones and discuss how they can be used to classify different operational theories.

An intriguing open problem is to characterize the set $\Call{\qudit}$ of all qudit implementable communication matrices. 
In the presented framework this amounts to finding the maximal elements of $\Call{\qudit}$ in the ultraweak preordering.
A similar question can naturally be posed for any operational theory $\state$, and this then gives a physically motivated way to compare different theories.

Another point to make is that we have been considering a state space $\state$ and taking all mathematically possible measurements as physical measurements. 
This corresponds to the so-called no-restriction hypothesis. 
One can relax this hypothesis and have a milder assumption that the set of measurements is simulation closed \cite{FiGuHeLe19}. 
This is enough to guarantee that the set of implementable communication matrices is closed in the ultraweak majorization and the introduced mathematical machinery hence makes sense. 

\section{Acknowledgements}

We wish to thank Julio I de Vicente for his helpful comments related to Example \ref{ex:G42-psdrank}. This work was performed as part of the Academy of Finland Centre of Excellence program, Project 312058. Financial support from the Academy of Finland, Project 287750, is also acknowledged. O.K. would like to acknowledge the financial support from the Turku University Foundation. L.L. acknowledges financial support from University of Turku Graduate School (UTUGS).

\end{document}